\documentclass[prl,amsmath,superscriptaddress,twocolumn, longbibliography]{revtex4-1} 
\usepackage[utf8]{inputenc}

\usepackage{amssymb}
\usepackage{amsmath}
\usepackage{amsthm}
\usepackage{amsfonts}
\usepackage{bbold}
\usepackage{bm}
\usepackage{bbm}
\usepackage{braket}
\usepackage{comment}
\usepackage{color}
\usepackage{times}
\usepackage{epsfig}
\usepackage{graphicx}
\usepackage{hyperref}
\usepackage{listings}
\usepackage{soul}
\usepackage{tikz}
\usepackage{multirow}
\usepackage{algorithm}

\theoremstyle{definition}

\usepackage{verbatim}

\newtheorem{lemma}{Lemma}
\newtheorem{theorem}{Theorem}
\newtheorem{corollary}{Corollary}
\newtheorem{definition}{Definition}








\renewcommand{\min}{\mathchoice{\operatorname*{min}}{\operatorname*{min}}{\mathrm{min}}{\mathrm{min}}}

\newcommand{\abs}[1]{|{#1}|}
\newcommand{\norm}[2][]{
  \ifthenelse{\equal{#1}{}}
    {\left\| {#2} \right\|}
    {\ifthenelse{\equal{#1}{uinv}}
      {\left\vert\kern-0.25ex\left\vert\kern-0.25ex\left\vert {#2} \right\vert\kern-0.25ex\right\vert\kern-0.25ex\right\vert}
      {\left\| {#2} \right\|_{#1}}
    }
}

\newcommand{\taverage}[2][]{
  \ifthenelse{\equal{#1}{}}
  {\overline{#2}}
  {\overline{#2}^{#1}}
}

\newcommand{\tracedistance}[3][]{
  \ifthenelse{\equal{#2}{}}
  {\ifthenelse{\equal{#3}{}}
    {\mathcal{D}_{#1}}{}
  }{
    \ifthenelse{\equal{#1}{}}
    {\mathchoice{\operatorname{\mathcal{D}}\left(#2,#3\right)}{\operatorname{\mathcal{D}}(#2,#3)}{\operatorname{\mathcal{D}}(#2,#3)}{\operatorname{\mathcal{D}}(#2,#3)}}
    {\mathchoice{\operatorname{\mathcal{D}}_{#1}\left(#2,#3\right)}{\operatorname{\mathcal{D}}_{#1}(#2,#3)}{\operatorname{\mathcal{D}}_{#1}(#2,#3)}{\operatorname{\mathcal{D}}_{#1}(#2,#3)}}
  }
}
\newcommand{\fidelity}[3][]{
  \ifthenelse{\equal{#2}{}}
  {\ifthenelse{\equal{#3}{}}
    {\mathcal{F}_{#1}}{}
  }{
    \ifthenelse{\equal{#1}{}}
    {\mathchoice{\operatorname{\mathcal{F}}\left(#2,#3\right)}{\operatorname{\mathcal{F}}(#2,#3)}{\operatorname{\mathcal{F}}(#2,#3)}{\operatorname{\mathcal{F}}(#2,#3)}}
    {\mathchoice{\operatorname{\mathcal{F}}_{#1}\left(#2,#3\right)}{\operatorname{\mathcal{F}}_{#1}(#2,#3)}{\operatorname{\mathcal{F}}_{#1}(#2,#3)}{\operatorname{\mathcal{F}}_{#1}(#2,#3)}}
  }
}

\newcommand{\Sr}[3][]{
  \ifthenelse{\equal{#1}{}}
    {\operatorname{\mathnormal{S}}(#2\|#3)}
    {\operatorname{\mathnormal{S}}_{#1}(#2\|#3)}
}
\newcommand{\dd}[1]{\mathop{\mathrm{d}#1}}
\newcommand{\del}{\mathop{}\!\partial}





\makeatletter

\makeatother



\newcommand{\tr}{{\rm tr}}







\begin{document}

\title{\sffamily Measuring out quasi-local integrals of motion from entanglement}
\author{Bohan Lu}
 \thanks{These two authors contributed equally.}
\affiliation{Dahlem Centre for Complex Quantum Systems, Freie Universität, 14195 Berlin, Germany}
\author{Christian Bertoni}
 \thanks{These two authors contributed equally.}
\affiliation{Dahlem Centre for Complex Quantum Systems, Freie Universität, 14195 Berlin, Germany}
\author{Steven J.~Thomson}
\email{steven.thomson@ibm.com}
\affiliation{Dahlem Centre for Complex Quantum Systems, Freie Universität, 14195 Berlin, Germany}
\affiliation{Present address: IBM Research Europe - UK, Hursley, Winchester, SO21 2JN.}
\author{Jens Eisert}
\email{jense@zedat.fu-berlin.de}
\affiliation{Dahlem Centre for Complex Quantum Systems, Freie Universität, 14195 Berlin, Germany}
\affiliation{Helmholtz Center Berlin, 14109 Berlin, Germany}

\date{\today}

\maketitle
\section{Abstract}
 {\bf  Quasi-local integrals of motion are a key concept underpinning the modern understanding of many-body localisation, an intriguing phenomenon in which interactions and disorder come together. Despite the existence of several numerical ways to compute them - and astoundingly in the light of the observation that much of the phenomenology of many properties can be derived from them - it is not obvious how to directly measure aspects of them in real quantum simulations; in fact, the smoking gun of their experimental observation is arguably still missing. In this work, we propose a way to extract the real-space properties of such quasi-local integrals of motion based on a spatially-resolved entanglement probe able to distinguish Anderson from many-body localisation from non-equilibrium dynamics. We complement these findings with a new rigorous entanglement bound and compute the relevant quantities using tensor networks. We demonstrate that the entanglement gives rise to a well-defined length scale that can be measured in experiments. }   

\section{Introduction}
It is widely believed that generic quantum systems isolated from their environments will evolve under their own dynamics until they reach an apparent equilibrium state that locally resembles the expectations of a 
thermal equilibrium state~\cite{PolkovnikovReview,christian_review}. This expectation is seen as a stepping stone
to reconcile predictions from statistical mechanics and those of basic quantum mechanics.
One major exception to this rule is the case of low-dimensional quantum systems in the presence of random disorder. Non-interacting quantum systems in one dimension will entirely fail to thermalise due to any finite concentration of disorder~\cite{Anderson58}, and in recent decades it has been shown that interacting many-body systems appear to suffer the same fate~\cite{Fleishman+80,Basko+06}, leading to the 
phenomenon now known as \emph{many-body localisation} 
(MBL)~\cite{Huse+13,HuseNandkishoreOganesyanPRB14,Altman+15,Luitz+15,Alet+18,AbaninEtAlRMP19,1409.1252}.
From a theoretical standpoint, MBL is now fairly well understood in terms of the emergence of an extensive number of conserved quantities known as \emph{(quasi-)local integrals of motion} (LIOMs, also known as localised bits or $l$-bits) which can prevent many-body systems from reaching thermal equilibrium~\cite{HuseNandkishoreOganesyanPRB14,SerbynPapicAbaninPRL13_2}. While phenomenological models based around the concept of $l$-bits have seen great success~\cite{Ros+15,Imbrie+17}, and there are several approaches that can map microscopic models onto effective $l$-bit models~\cite{Rademaker+16,Rademaker+17,Pekker+17,Goihl18,Thomson+18,Kulshreshtha+19,Thomson+20b,Thomson+20c,Thomson+21,Thomson22,Bertoni+22}, the $l$-bits themselves remain a strictly theoretical construct, 
inaccessible to any experimental probes. This is in contrast with the case of Anderson localised systems, where the exponentially localised $l$-bits can be straightforwardly related to the real-space decay of the single-particle states, which has been experimentally observed~\cite{Billy+08}.

\begin{figure}[t!]
    \centering
    \includegraphics[width=.9\linewidth]{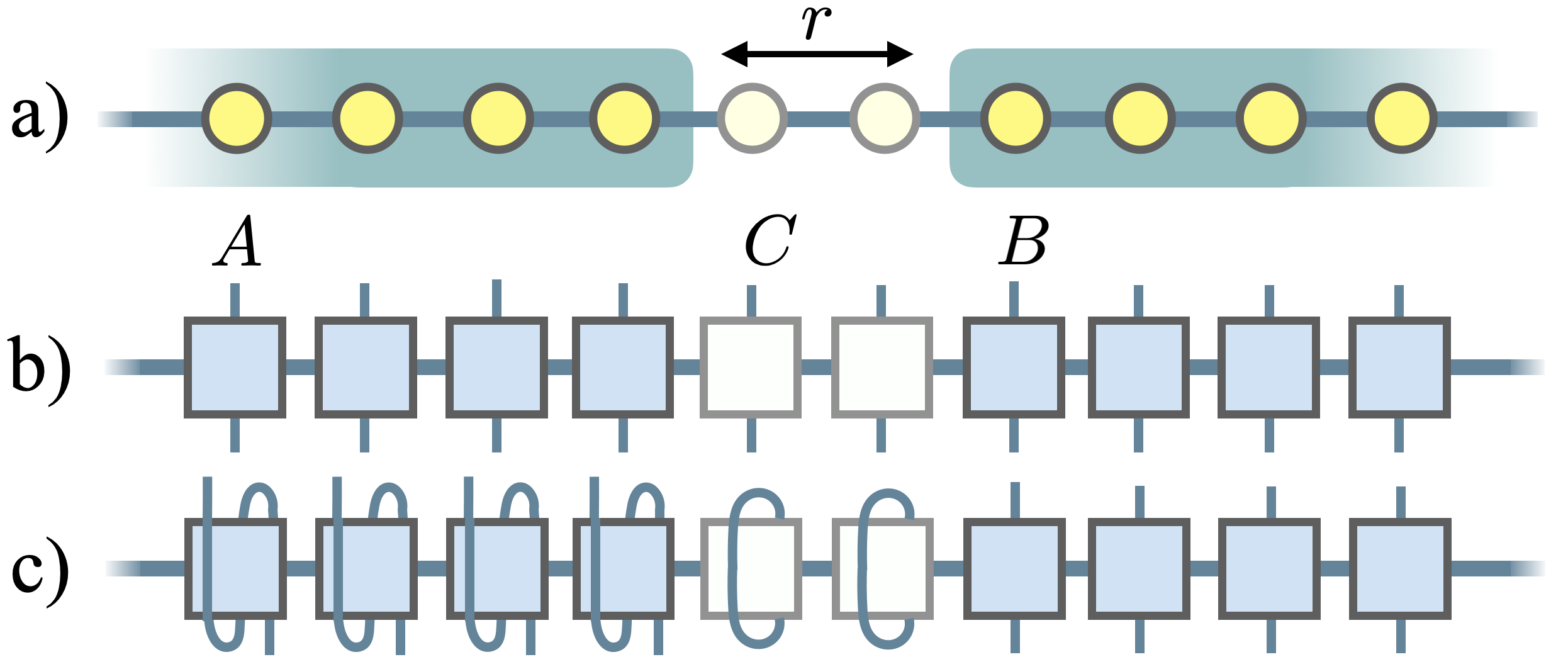}
    \caption{\textbf{Division into subsystems and computation of negativity: }a) A sketch showing how a one-dimensional spin chain is partitioned into three subsystems. We are interested in computing the entanglement between subsystems $A$ and $B$ after subsystem $C$ has been traced out, giving rise to a spatially-resolved entanglement measure.
    b) Sketch of the initial quantum state 
    in 
    \emph{matrix product operator} 
    (MPO) form, made by taking the outer product of two matrix product state vectors. c) Sketch of how the negativity is computed: the partial transpose of subsystem $A$ corresponds to `twisting' the MPO legs while tracing out subsystem $C$ corresponds to contracting the relevant MPO indices.}
    \label{fig.fig1}
\end{figure}

In this work, we propose an experimentally feasible approach to measuring the actual real-space properties of local integrals of motion in many-body quantum systems using the entanglement negativity, a sensitive entanglement monotone that allows for the recovery of spatially resolved entanglement information.  
In this way, we accommodate the above missing link.
Various quantities capturing correlations and entanglement, including the negativity, have been measured in recent experiments with ultra-cold bosons: Ref.~\cite{lukinProbingEntanglementManybody2019} has measured single-site
and half chain number and configurational entanglement
for a system subject to a quasi-periodic potential. These can be seen as a 
witness detecting the absence of thermalization, 
but they do not provide a length
scale. The authors have also measured classical density-density correlations---akin to the proposal 
of 
Ref.~\cite{Accessible}---showing exponentially decaying correlations, but this is a two-point
classical measure, in contrast to the genuine entanglement between two half-regions
considered here.
%
Reference~\cite{Chiaro+22} has studied a disordered system and has shown that the entanglement negativity can be directly measured. 
In a first experiment, the authors of the latter work prepared the system in a product state and measured the two-qubit entanglement of formation as they vary the separation between the qubits. While this setting is close in spirit to our approach, they have chosen a two-qubit setting, which is the only setting in which one can compute this quantity, so that the diagnostic time scale that allows observation of any spatial dependence is short. In a second experiment, the preservation of entanglement has been studied, departing from the approach taken here.
Here, we demonstrate that the negativity itself gives direct access to a unique length scale that characterises the $l$-bits.

\section{Results and Discussion}
\emph{Quasi-local operators.} The question of whether many-body localisation is a well-defined stable phase in the thermodynamic limit remains unsettled, nevertheless systems showing MBL-like phenomenology for experimentally accessible times and system sizes appear to be well described by $l$-bit models, and their length scale is physically meaningful regardless of whether this description keeps holding for very long times and very large system sizes. Since the goal of the present work is to characterize the $l$-bits, we will now give precise definitions of what they are and the role they play in the dynamics.

\begin{definition}[Quasi-local operators]
	An operator $O$ on the lattice $\Lambda$ is said to be quasi-local around a region $R$ with localisation length $\xi$ if for any region $X\subset \Lambda$ containing $R$
	\begin{equation}
	    \left|\left|O- \frac{1}{2^{|X^c|}}\tr_{X^c}(O)\otimes \mathbb{I}_{X^Cc}\right|\right|^2\leq \|O\|^2\,K e^{-d(R,X^c)/\xi}
	    \label{eq.quasi-local}
	\end{equation}
	where $ K>0$ is a universal constant, $X^C$ denotes the complement of $X$, $d(X^c,R)=\min_{x\in X^c, r\in R}d(x,r)$ is the length of the shortest path from $X$ to $R$, and $\|\cdot \|$ is the normalized 2-norm, $\|O\|^2={\tr{OO^{\dagger}}}/{\tr{I}}$.
\end{definition}
Intuitively, this definition says that most of the support of $O$ is concentrated in and immediately around the region $R$, in the sense that if we truncate $O$ to an operator only supported on a sphere centered at $R$, the resulting operator differs from $O$ only by an error decaying exponentially in the radius of the sphere.
Interestingly, the relatively loose sense of decay in $2$-norm seems crucial, an insight that is often under-appreciated
\cite{goihlConstructionExactConstants2018,quasilocalProsen}. 
It is important to note that this is an abstract definition: it does not give operational advice on how to find those $l$-bits. What is more, even if they exist, they are by no means unique~\cite{Chandran+15}. There could be ``more local'' $l$-bits 
than those given that still give rise to a complete set of $l$-bits. Either way, as is common, such $l$-bits serve as our definition for many-body localisation. 

\begin{definition}[Many-body localisation]\label{def:mbl_hamiltonian} A Hamiltonian 
\begin{equation}
	H= \sum_{j=1}^n \omega_j^{(1)} h_j + \sum_{j,k=1}^n \omega^{(2)}_{j,k} h_j h_k + \dots
	\label{eq.mbl}
\end{equation}
with real weights $\{\omega_j^{(1)}\}$ and $\{\omega_{j,k}^{(2)}\}$,
is called many-body localised if it can be written as a sum of mutually commuting ($[h_j,h_k]=0$
for all $j,k$) quasi-local terms $h_j$, each centred around site $j$, and if $\omega_{i_1,\dots, i_n}\leq \omega e^{-|i_1-i_n|/\kappa}$ for some constant $\kappa>0$, where $i_1< i_2\dots < i_n$.
\end{definition}
In other words, a many-body localized Hamiltonian can be written as an effective model which is classical, in the sense that all Hamiltonian terms commute, and quasi-local, in the sense that the total coupling strength between two sites decays exponentially with distance.

{\it Premise of the approach.} 
When written in the basis that diagonalises the Hamiltonian, 
as in Eq.~(\ref{eq.mbl}), these $l$-bits are strictly local objects, but in real-space they are quasi-local with exponentially decaying tails. In order to extract properties of $l$-bits from experiments, we shall consider the evolution of an arbitrary initial state under the following Hamiltonian dynamics as
\begin{equation}
	\rho(t) = e^{-itH} \rho e^{itH},
\end{equation}
for times $t\geq 0$. To simplify the notation, we will suppress the time argument for time $t=0$. How can this time evolution be exploited to measure out real-space properties of the $l$-bits? Some intuition can be attained in the situation when the $\{h_j\}$ are strictly local. The terms that do not overlap do not contribute to the entanglement evolution at all. So in the end, it is the overlapping tails that will lead to entanglement growth. 

\emph{Model.} We will demonstrate our scheme numerically using the `standard model' of MBL, namely the XXZ spin-1/2 chain with random on-site fields, while it should be clear
that the approach taken would be applicable to
any many-body localised model. 
Its Hamiltonian is given 
by
\begin{align}
    H= J_0 \sum_{i} \left(S^{x}_i S^{x}_{i+1} + S^{y}_i S^{y}_{i+1} + \Delta S^{z}_i S^{z}_{i+1}  \right) + \sum_i h_i S^{z}_i,
\end{align}
with $h_i \in [-d,d]$. We shall set $J_0=1$ as the unit of energy throughout, with $\Delta=1.0$ unless otherwise stated, and will use open boundary conditions. This model has been thoroughly studied and shown to exhibit a phase with anomalous thermalisation properties above a disorder strength of $d \gtrsim 3.7$~\cite{Luitz+15}, although recent work has suggested that the true phase transition in the thermodynamic limit could be at much larger values of $d$ if it exists at all~\cite{Doggen+18,Suntajs+20a,Suntajs+20b,Sels+21a,Sels+21b}. 

The characteristic growth in time of the 
von Neumann entanglement entropy  
~\cite{Bardarson+12,Prosen_localisation}
(or its correlation-based analogues \cite{Accessible}) mentioned above has been shown to be a good indicator of many-body localisation, able to distinguish it from single-particle Anderson localisation via the late-time logarithmic growth.
Motivated by this, our aim in this work is to show that other entanglement measures which provide spatially-resolved information can not only distinguish many-body localisation from Anderson localisation, but can also allow direct quantitative measurement of the properties of many-body local integrals of motion. 

\emph{Diagnostic entanglement quantity.} The main quantity of interest in this work is the logarithmic negativity, a measure of the entanglement between two subsystems of the spin chain, denoted $A$ and $B$, 
separated by a distance $r$, which together with $C$ constitutes the entire system (sketched in Fig.~\ref{fig.fig1}). 
It is defined as
\cite{PhysRevA.58.883,EisertPlenioNeg,VidalNegativity,Plenio+05}
\begin{align}
    E_{N}(\rho_{A,B}(t)) &:= \log_2 (\|\rho_{A,B}^{T_A}(t)\|_{1}),
\end{align}
where $\|O\|_{1} = \tr|O|$
denotes the trace norm, $\rho_{A,B}(t)=
\tr_{\backslash \{A,B\}}[\rho(t)]$ is the time-dependent quantum state of subsystems $A$ and $B$ after tracing out all other lattice sites, and the superscript $T_A$ indicates the partial transpose with respect to subsystem $A$. This has been shown to be an entanglement monotone
meaningfully quantifying 
entanglement \cite{PhD,VidalNegativity,Plenio+05}.
In the following, we shall refer to this quantity simply as `negativity'. 
By contrast to the more commonly studied bi-partite von Neumann or Rènyi entanglement 
entropies which consider a single bi-partition between two connected subsystems, the entanglement negativity allows for a meaningful spatially resolved measure of mixed-state entanglement, as the two subsystems can be separated by an arbitrary distance 
$r :={\rm dist}(A,B)$, a feature the von Neumann entropy cannot capture as a pure state entanglement measure. 
This measure can also be used to study the entanglement between subsystems of arbitrary size. 
However, for conceptual clarity, we shall mainly consider $A$ and $B$ to cover the entire chain except for
a piece $C$ with $|C|=r+1$ separating $A$ and $B$, as shown in Fig.~\ref{fig.fig1}. That said, the
concept works as well for small regions $A$ and $B$, as they are accessible in experiments and
are discussed in the rigorous bounds. Numerical evidence is shown in Supplementary Note 5. The negativity has previously been investigated in the context of ground states of disordered spin chains~\cite{Ruggiero+16}, quenches in random spin chains~\cite{Ruggiero+22}, the many-body localisation transition~\cite{grayScaleInvariantEntanglement2019}, and quench dynamics in the presence of a defect~\cite{Gruber+20}. 
For clarity, in the following, we shall drop 
the explicit dependence of $E_N$ on the quantum state and instead use the notation $E_N(r,t)$ to represent the negativity associated to two subsystems separated by a distance $r$ and a time $t$ following a quench from an initial product state, emphasizing that this is indeed a spatially resolving 
entanglement measure.

A heuristic argument for why this quantity is relevant in our case can be given in the following manner. Ref.~\cite{SerbynPapicAbaninPRL13_2} 
has shown that the von Neumann entanglement entropy grows in time following a quench according to $S_{\rm ent} \propto \ln(J_0 t/\hbar)$, 
once the system enters the late-time equilibration regime. If we wish to consider the entanglement negativity between two subsystems separated by a distance $r$, a reasonable starting assumption is that the negativity will vary in time according to the same $\sim \ln(t)$ growth but will be exponentially suppressed in magnitude due to the spatial separation of the two subsystems, leading to an overall behaviour of $E_N \propto \exp(-r/\xi) \ln(J_0 t/\hbar)$.  We shall show that this 
 ansatz is a good match for the numerical results. We also wish to emphasise that this logarithmic growth is characteristic of the interacting system and is entirely absent from Anderson-localised systems, meaning that the existence of this length scale is a distinct fingerprint of a many-body localised system.
 \smallskip

\emph{Corroborating the reasoning with rigorous bounds.} We see that 
Hamiltonians that are many-body localised in the sense of Definition 2 create entanglement at a rate that 
decays exponentially in the distance $r={\rm dist}(A,B)$ between parts $A$ and $B$, reflecting 
the exponential decay of the tails in quasi-local $l$-bits. In fact, not only 
this intuition can be made entirely rigorous, but, at the cost of slightly weakening the definition of quasi-locality, we are in the position to 
state precise upper bounds for the
negativity for all times and distances.

\begin{theorem}[Rigorous entanglement bounds] 
Let $\rho$ be an initial product state. Let $H$ be a many-body localised Hamiltonian as per Definition \ref{def:mbl_hamiltonian} with localisation length $\xi<{1}/({4\log(2)})$ and $2(1/\kappa-\log(2))> 1/\xi$, consider three blocks $A,C,B$ such that $C$ divides $A$ from $B$, with $|C|=r+1$. The
growth of the  negativity of the state
$\rho(t) = e^{-itH}\rho e^{it H}$ restricted to the regions $A,B$ is bounded as 
\begin{equation}
    E_N(r,t)\leq \min\{t\,O(e^{-r/(2\xi)}), 8\xi\log_2(t)-2r\}+O(1),
    \end{equation}
for times $t\geq e^{r/(4\xi)}$, while for $t<e^{r/(4\xi)}$,
\begin{equation}
    E_N(r,t)\leq t\,O(e^{-r/(4\xi)}).
    \end{equation}
\end{theorem}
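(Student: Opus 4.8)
The plan is to exploit the commuting $l$-bit structure to isolate the few Hamiltonian terms that genuinely bridge $A$ and $B$ across $C$, and then to bound the negativity these produce in two complementary ways: a rate estimate linear in $t$, and a saturating per-term estimate that grows only logarithmically. First I would pass from the quasi-local $h_j$ to strictly local truncations $\tilde h_j$ supported on balls of a radius chosen proportional to $r$; this is the ``slight weakening'' of quasi-locality alluded to above. The truncation error is controlled by the $2$-norm decay of Definition 1: since that definition bounds $\|\cdot\|^2$ by $Ke^{-d/\xi}$, taking square roots produces operator errors of order $e^{-d/(2\xi)}$, and truncating an evolved operator at the midpoint of $C$ (so $d=r/2$) produces the scale $e^{-r/(4\xi)}$, while truncating at the far boundary ($d=r$) produces $e^{-r/(2\xi)}$ — precisely the two exponents in the statement. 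The hypotheses $\xi<1/(4\log 2)$ and $2(1/\kappa-\log 2)>1/\xi$ enter here: they ensure the decaying weights beat the $2^{|X|}$ growth in the number of operator-basis elements on a region $X$, so the sums over terms below converge, the $\log 2$ offsets being exactly the entropy-counting factors.

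Next I would use that all terms of $H$ commute to factor $e^{-iHt}=e^{-iH_{AB}t}\,e^{-iH_{\mathrm{rest}}t}$, where $H_{AB}$ collects the truncated terms whose support meets both $A$ and $B$ and $H_{\mathrm{rest}}$ the remainder, which splits further into commuting $C$-only, $A$-only and $B$-only pieces. The $C$-only exponential cancels under $\tr_C$ by cyclicity, and for $C$ wide compared to the truncation radius the $A$-only and $B$-only exponentials act on disjoint subsets, so on the product state $\rho$ they yield, after tracing out $C$, a product across $A\!:\!B$ with zero negativity. By monotonicity of the logarithmic negativity under these local channels, $E_N(r,t)$ is bounded by the negativity generated by the bridging unitary $e^{-iH_{AB}t}$ alone. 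With the chosen radius the only strictly bridging terms are two-body products $\omega_{jk}\tilde h_j\tilde h_k$ with $j$ on the $A$ side and $k$ on the $B$ side, of strength $\omega_{jk}\lesssim e^{-r/\kappa}$; the quasi-local tails reaching across $C$ are absorbed into the truncation error at the scales $e^{-r/(2\xi)}$ and $e^{-r/(4\xi)}$, and the condition $2(1/\kappa-\log 2)>1/\xi$ makes the two-body contribution subdominant to these tails.

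Finally I would bound the bridging negativity two ways. An entangling-rate bound for the logarithmic negativity gives $dE_N/dt\le c\sum_S\omega_S\|\tilde P_S\|_{\mathrm{cross}}$, whose right-hand side is $O(e^{-r/(4\xi)})$; integrating yields the short-time estimate $E_N\le t\,O(e^{-r/(4\xi)})$. For long times I would instead write $e^{-iH_{AB}t}=\prod_S e^{-it\omega_S\tilde P_S}$ and bound $E_N\le\sum_S\min\{c,\,c'\omega_S t\}$, organising the sum by crossing distance $d\ge r$: terms with $\omega_S t\gtrsim 1$, i.e.\ $d\lesssim 2\xi\ln t$, each contribute $O(1)$ and number $\sim 2\xi\ln t-r$ shells, producing the logarithmic term $8\xi\log_2 t-2r$ once the per-shell multiplicities and the change of base are tracked, while the remaining terms sum to $t\,O(e^{-r/(2\xi)})$ — itself the binding linear estimate when $t\lesssim e^{r/(2\xi)}$ and $O(1)$ afterwards. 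Taking the minimum of the linear and logarithmic estimates, adding the $O(1)$ truncation error, and comparing when the linear rate is still $o(1)$ (the threshold $t=e^{r/(4\xi)}$) assembles the two displayed regimes.

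The main obstacle is twofold. The harder part is the saturating, near-subadditive control $E_N\le\sum_S\min\{c,c'\omega_S t\}$: one must show that composing many commuting bridging unitaries never over-counts entanglement, so each factor contributes at most $O(1)$ and at most linearly in its phase. This requires a genuine entangling-capacity bound for the logarithmic negativity together with its subadditivity under composition, which — unlike for the von Neumann entropy — is delicate for negativity. The second difficulty is discharging the partial trace over $C$ and the truncation tails rigorously, so that only honest $A$-to-$B$ bridging survives while the combinatorial growth in the number of bridging terms stays controlled, which is exactly what the hypotheses on $\xi$ and $\kappa$ are there to secure.
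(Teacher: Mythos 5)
Your overall skeleton (isolate the terms that bridge $A$ and $B$, bound their negativity both by a rate linear in $t$ and by a saturating capacity, then compare at $t\sim e^{r/(4\xi)}$) is the right shape, but two of your steps fail as stated, and they are exactly where the paper's actual work lies. First, the truncate-then-factor step is invalid: once you replace the commuting $h_j$ by strictly local truncations $\tilde h_j$, the truncations no longer commute, so $e^{-iHt}$ does not factor as $e^{-iH_{AB}t}e^{-iH_{\mathrm{rest}}t}$; and, more seriously, you cannot ``absorb'' the discarded quasi-local tails into an additive $t\,O(e^{-r/(2\xi)})$ error in the \emph{negativity}. An operator-norm error $\epsilon$ in the Hamiltonian gives a trace-norm error $\sim t\epsilon$ in the state, but converting a trace-norm error into a negativity error costs the dimension factor of the transposed region, via $\|X^{T_A}\|_1\leq d_A\|X\|_1$ with $d_A=2^{|A|}$ --- and $A$ is extensive, so this factor is exponential in system size, not in $r$. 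The paper's proof is built to circumvent precisely this: it first splits the \emph{original} commuting terms into three commuting groups (left, middle, right), writes each group's evolution \emph{exactly} as a local evolution times a Dyson-series ``tail evolution'', telescopes the tail into shells $Z_{S+q}$ supported only $q$ sites beyond the truncation region with $\|Z_{S+q}\|=O(e^{-(r+q)/(2\xi)})$, and then bounds the shells' multiplicative effect on $\|\rho^{T_B}\|_1$ by $\exp\bigl(2t\sum_q\|Z_{S+q}\|\,2^{2q}\bigr)$. The hypothesis $\xi<1/(4\log 2)$ is what makes the shell norms beat these dimension factors $2^{2q}$ (your attribution of it to operator-basis counting is off; the counting of multi-body $l$-bit products $2^{2l}$ against $e^{-2l/\kappa}$ is what the \emph{other} hypothesis, $2(1/\kappa-\log 2)>1/\xi$, handles, in a separate regrouping lemma).

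Second, your long-time estimate $E_N\leq\sum_S\min\{c,\,c'\omega_S t\}$ with $O(1)$ per saturated term --- which you yourself flag as the hard part --- is not just delicate but wrong as stated: the entangling capacity of a commuting bridging unitary supported on $S$ is not $O(1)$ but $2\log_2 d_{A\cap S}$, i.e.\ linear in the crossing length, and once you weight the shells at crossing distance $d$ by their multiplicities the sum overshoots the logarithm. The paper never does per-term accounting. Instead it proves, for \emph{every} free parameter $r\geq|C|/2$, the single bound $E_N(t)\leq 4r-2|C|+t\,O(e^{-r/(2\xi)})$, where the capacity term $4r-2|C|$ comes from two applications of the inequality $\|\sigma^{T_X}\|_1\leq d_X\|\sigma\|_1$ to the \emph{whole} truncated middle- and right-region evolutions at once (worst-case saturation assumed, no counting of which terms have acted), and the $t\,O(e^{-r/(2\xi)})$ term collects all tails. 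The logarithm then appears only in a corollary, by optimizing the free radius $r\approx 2\xi\log t$ against the linear-in-$t$ tail term; the short-time bound is the same statement with $r=|C|/2$, where the capacity term vanishes. If you want to rescue your plan, you need (i) the shell-by-shell tail bookkeeping above in place of ``absorption'', and (ii) a single adjustable-width capacity bound in place of the per-term saturating sum.
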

We hence find a short time behaviour signifying a linear growth in time, a
cross-over regime governed by the correlation length, and a logarithmic
growth for long times. These bounds---interesting in their own right and
complementing and refining those of Ref.~\cite{kimLocalIntegralsMotion2014}---are
perfectly compatible with the above numerical assessment.
In Supplementary Note 8, we state details of the 
proof
of the bound that makes extensive use of the precise 
form of the tails
of the $l$-bits. Based on our numerical results, we expect that our assumptions on the localisation length and the definition of quasi-locality can be relaxed without affecting the result. We also note that the observed $\xi-$dependence of the late time decay of the entanglement with the size of $C$ is not visible in this bound, though we expect that it can be refined to show this.
\begin{figure}[t]
    \centering
    \includegraphics[width=\linewidth]{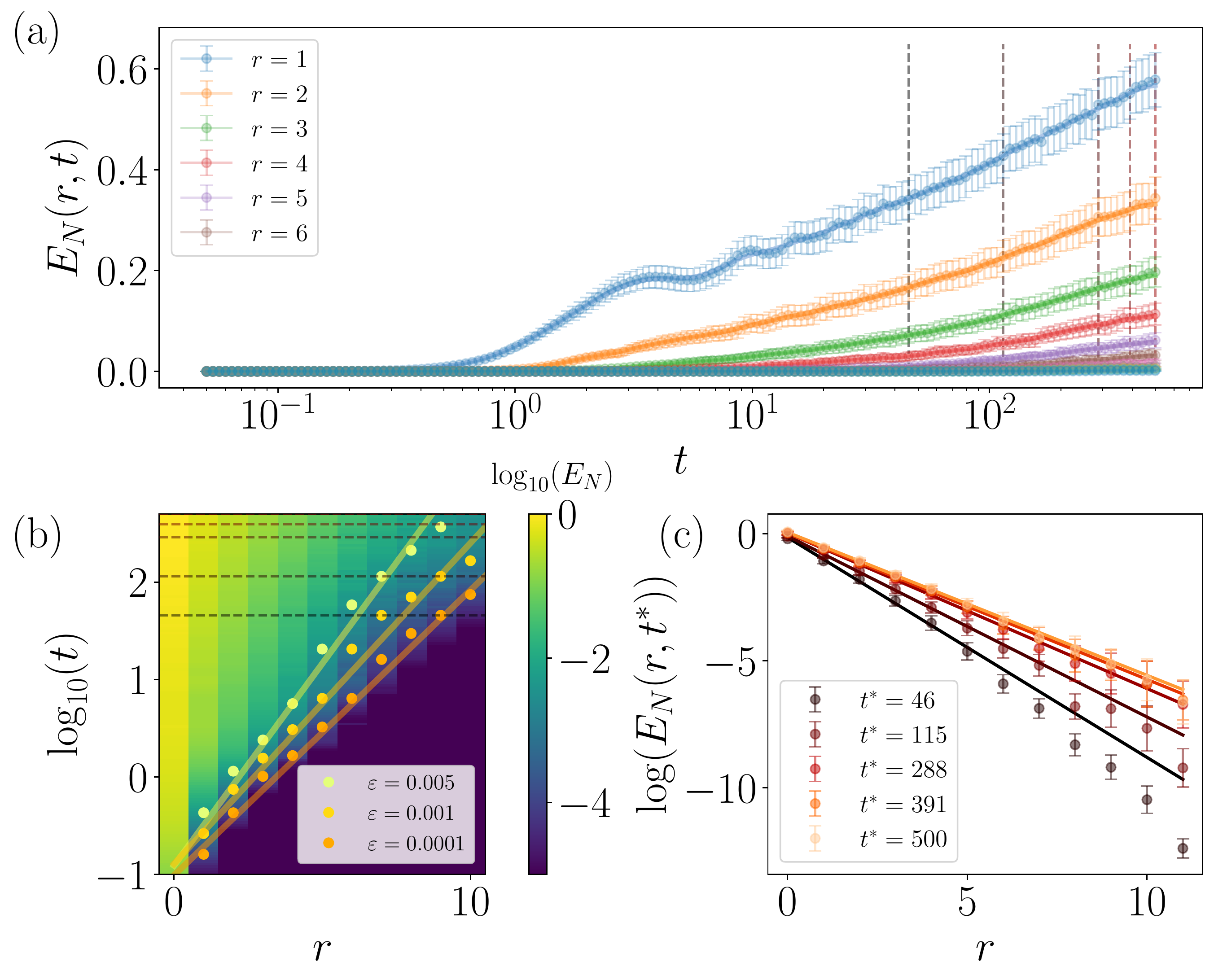}
    \caption{\textbf{Behaviour of entanglement negativity in time and space:} Results showing the growth of the negativity $E_N(r,t)$ with time for different distances $r$. Data is shown for a system size $L=24$ and a disorder strength $d=8.0$, averaged over $N_s=100$ disorder realisations. a) The dynamics of $E_N(r,t)$ following a quench from a Nèel state, showing the logarithmic growth at late times. The circular markers are the raw data points, while the solid lines are a smoothed guide to the eye. The error bars indicate the standard error in the mean. We note that these error bars show agreement on average between the various disorder realizations, but they are not fully statistically independent errors, as would be expected in an experiment where each data point would come from a different run. b) The full dynamics of $E_N(r,t)$, reflecting the logarithmic `light cone'. Each circle maps the point where the negativity grows beyond the corresponding threshold $\varepsilon$ and the lines are linear fits.  c) By extracting the behaviour of $E_N(r,t^*) \propto \exp(-r/\xi)$ at fixed times $t^*$ [dashed vertical lines in panel (a), horizontal lines in panel (b)], we can extract a well-defined length scale $\xi(t)$, which depends only weakly on time. The solid lines indicate the fits to the data points which are used to extract the $l$-bit length scale, demonstrating convergence at late times.}
    \label{fig.fig2}
\end{figure}

\emph{Numerical results.} We first discuss qualitatively the results for the growth of the entanglement negativity with time for various different distances $r$, as shown in Fig.~\ref{fig.fig2}a) for a disorder strength $d=8.0$ (deep in the localised phase), where we find that indeed the negativity grows logarithmically with time at late times. 
Results for further disorder strengths, system sizes, and subsystem sizes are available in Supplementary Notes 2,3,4 and 5. At short times, the negativity is dominated by diffusive transport on length scales shorter than the localisation length. At large distances $r$, the negativity remains close to zero until a time exponentially large in $r$, which can be used to define a `light cone' that characterises the spreading of the entanglement negativity, shown in Fig.~\ref{fig.fig2}b). The three lines indicate when the negativity grows above a threshold $\varepsilon \in \{ 0.0001,0.001,0.005 \}$, mapping out an approximately logarithmic light cone. As the negativity outside of this length cone is exponentially small, in the following analysis, we restrict ourselves to space-time coordinates $(r,t)$, which are within the light cone. The existence of this light cone means that we gain only diminishing returns by going to larger system sizes: although we are able to separate the subsystems by a larger value of $r$, the evolution time required to obtain meaningful entanglement scales exponentially in $r$, which incurs a large computational cost for large systems and quickly becomes prohibitive.

In the late-time logarithmic growth regime, where the dynamics are dominated by the quasi-local nature of the $l$-bits, we extract the value of the negativity at a given time $t^*$ following the quench from an initial Néel state and plot it versus the subsystem separation $r$. We show this in Fig.~\ref{fig.fig2}c) for several different choices of time $t^*$ [indicated by the dashed lines in Fig.~\ref{fig.fig2}a)]. The data points form a straight line (on a logarithmic scale), and at late times the gradient of the line does not strongly change with the choice of time $t^*$, appearing to saturate at a fixed value (although the y-axis offset will, of course, continue to increase in time). Further details are available in Ref. Under the assumption that the 
negativity decays exponentially with distance like $E_N(r,t^*) 
\propto \exp(-r/\xi)$, we can perform a linear fit to the data shown in Fig.~\ref{fig.fig2}c) and extract a well-defined length scale $\xi$ which characterises the spatial extent of the $l$-bits. The results are shown in Fig.~\ref{fig.fig3}, where we find that the length scale $\xi$ exhibits monotonic decay with increasing disorder strength, as expected. Note that no assumptions are involved other than the exponential decay of the negativity with distance at some fixed time $t^*$: the resulting length scale is an emergent property of the many-body system. This assumption does not hold in the delocalised phase, where the entanglement does not enter a regime of logarithmic growth.
We can further compare the length scale extracted from our procedure with the $l$-bit decay lengths computed using the established numerically exact method of Ref.~\cite{goihlConstructionExactConstants2018}, using the definition of quasi-locality from Eq.~(\ref{eq.quasi-local}). We find excellent agreement between the entanglement-based length scale and the $l$-bit localisation length obtained independently from this method, confirming that the length scale probed by the negativity is the localisation length of the $l$-bits.

\begin{figure}[t]
    \centering
    \includegraphics[width=\linewidth]{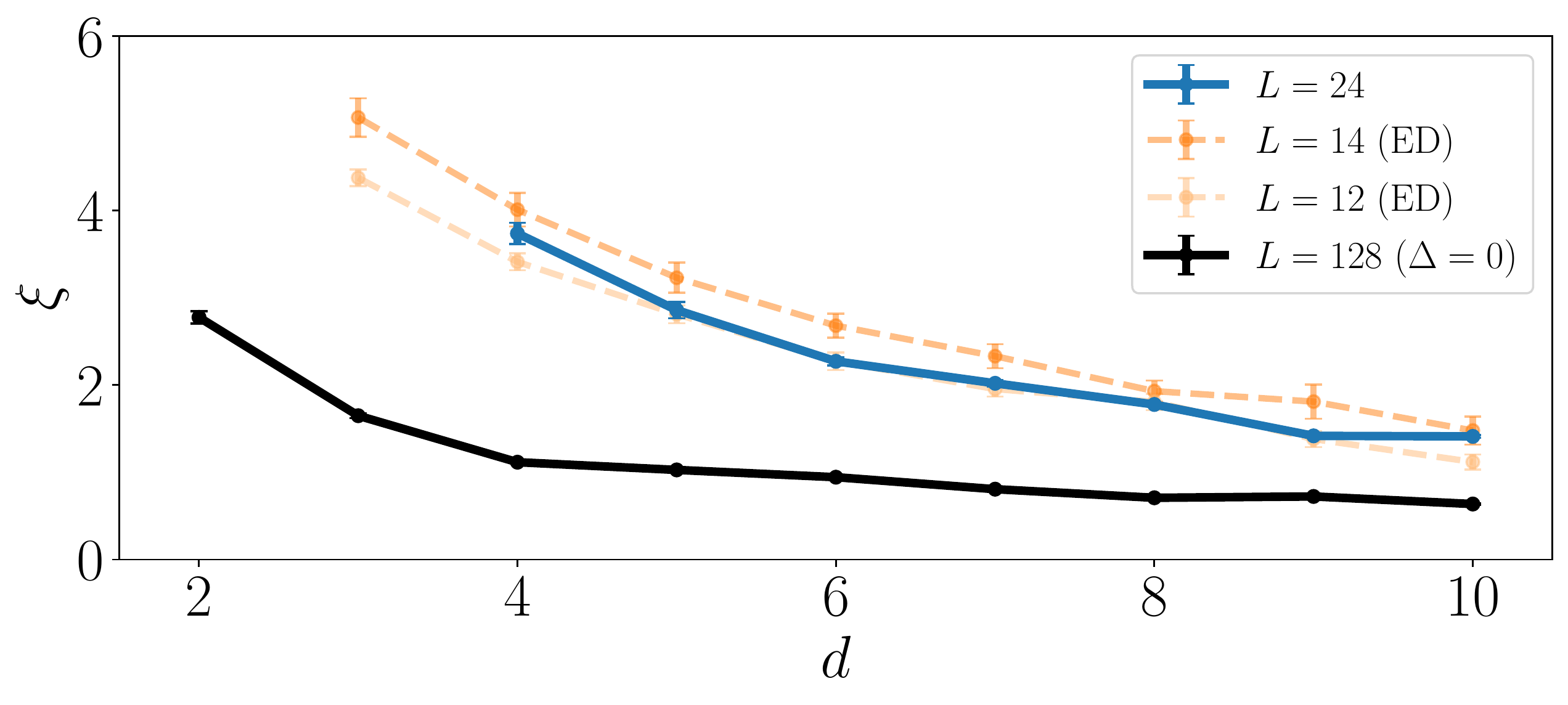}
    \caption{\textbf{Extracted l-bit length scale:} The characteristic $l$-bit length scale $\xi$ extracted from the entanglement negativity at time $t^* = 500$, shown in blue for $L=24$ with $N_s \in [50,100]$ disorder realisations and various values of the disorder strength $d$. Error bars indicate the fit error and are roughly the same size as the plot markers. Orange lines mark the localisation length obtained through exact diagonalisation following Ref.~\cite{goihlConstructionExactConstants2018}. For further details on calculating the localisation length using exact diagonalisation, see Supplementary Note 7 \cite{SM}. The black line indicates the localisation length of the corresponding Anderson-localised system, obtained by directly diagonalising the Hamiltonian in the non-interacting limit ($\Delta=0$), for a system of size $L=128$ with $N_s=10000$.}
    \label{fig.fig3}
\end{figure}

For comparison, we also indicate the corresponding 
localisation length of an Anderson localised system, here obtained by directly diagonalising the Hamiltonian with $\Delta=0$ (following a Jordan-Wigner transform into the fermionic representation). We compute the
eigenvectors of the Hamiltonian in the non-interacting setting,
which decay in real space as
$\exp(-r/\xi)$~\cite{Lee+85}, average over disorder realisations and extract the localisation length $\xi$ from a least-squares fit.
The length scale extracted from the TEBD data behaves in a qualitatively similar manner to the single-particle localisation length but is always larger, confirming that we are not measuring single-particle properties but are indeed extracting a genuinely many-body feature of the system. In the 
delocalised phase, our assumed form of the negativity is no longer valid, and as such, the method cannot extract a reliable length scale.

We also note that the entanglement negativity is not the only entanglement measure which may be used in this way: any spatially-resolved entanglement probe should behave similarly. In Supplementary Note 6, we demonstrate that the mutual information also gives consistent results.

\section{Conclusion}
In this work, we have outlined an experimentally feasible procedure for measuring local integrals of motion based on their contribution to the slow growth of the negativity at long times following a quench from an arbitrary initial state. We have demonstrated that the length scale which we obtain from this procedure, which characterises the $l$-bits, is in good agreement with that obtained using other theoretical methods in the literature. The crucial advantage is that our scheme is experimentally tractable, unlike other purely theoretical/numerical methods, which cannot be verified in real experiments. It would be extremely interesting to apply this method to other scenarios where many-body localisation is believed to exist, such as in disorder-free systems and two-dimensional models, in order to see if well-defined length scales based on the spreading of entanglement may still be extracted in these situations. This work paves the way for the application of spatially-resolved entanglement probes to phenomena in quantum simulation beyond many-body localisation, where such methods may be able to provide valuable insight into emergent length scales associated with other types of quasi-particles.

\section{Methods}
We compute the negativity using time-dependent matrix product state simulations
-- an instance of a tensor network method 
\cite{Orus-AnnPhys-2014}
-- implemented in the \texttt{Quimb} package~\cite{quimb} using the 
\emph{time-evolving block decimation} (TEBD) 
algorithm to perform the evolution~\cite{Vidal04,MPSRev}, with the system initially prepared in a Néel state.
We use system size $L=24$ with a maximum bond dimension of $\chi=192$. We perform the time evolution using a maximum time step $dt=0.05$, at each step discarding singular values smaller than $\epsilon = 10^{-10}$. We have checked that the results are well-converged. 
Detailed benchmarks are shown in Supplementary Note 1. 
Our TEBD results are compared against $l$-bit length scales obtained using exact diagonalisation, following Ref.~\cite{goihlConstructionExactConstants2018}.

The negativity can be computed straightforwardly from a 
\emph{matrix product state} (MPS) representation~\cite{grayFastComputationManyBody2018a}. The 
state vector can be turned into a \emph{matrix product operator} (MPO) (sketched in Fig.~\ref{fig.fig1}) representing the quantum state by considering vectors and dual vectors
represented as MPS. The partial transpose can be computed by `twisting' the legs of the MPO tensors, while the partial trace over the subsystem $C$ can be performed by contracting the free indices of the MPO tensors in this subsystem. At long times, the negativity should saturate at a value controlled by the size of the subsystems, and at any time $t < t^s$ (where $t^s$ is the saturation time), the negativity should satisfy the hierarchy $E_N(r_1,t) < E_N(r_2,t)$ for any two distances $r_1 > r_2$.
\section{Data availability}
The full data for this work is available at Ref.~\cite{data}.

\section{Code availability}
The full code for this work is available at Ref.~\cite{code}.

\section{Author contributions}

JE initially conceived the project. BL and JE jointly developed the idea of using entanglement bounds. SJT and BL wrote the code and performed the simulations. CB and JE proved the entanglement growth bound. All authors contributed to the writing of the final manuscript.

\section{Acknowledgements}
This project has been inspired by discussions with
P.~Roushan and B.~Chiaro of Google AI. We also thank 
A.~Kshetrimayum, S.~Sotiriadis, S.~Qasim, and J.~Gray for discussions. B.~Lu is grateful for feedback from D.~Abanin, M.~Fleischhauer, and M.~Kiefer-Emmanouilidis at the CRC 183 
summer school ``Many-body physics with Rydberg atoms". We gatefully acknowledge D. Toniolo for finding an error in an earlier draft of this work. This project has received funding from the European Union’s Horizon 2020 research and innovation programme under the Marie Skłodowska-Curie grant agreement No.~101031489 (Ergodicity Breaking in Quantum Matter), the Quantum Flagship (PASQuanS2),the Munich Quantum Valley,  the Deutsche Forschungsgemeinschaft (CRC 183 and FOR 2724), and the ERC (DebuQC). 
We also acknowledge funding from the BMBF (FermiQP and MUNIQC-ATOMS).


\pagebreak
\clearpage
\onecolumngrid
\appendix

\section{\large{Supplementary information}}
\setcounter{figure}{0}
\renewcommand{\figurename}{Fig.}
\renewcommand{\thefigure}{S\arabic{figure}}
\section{Supplementary note 1: TEBD accuracy benchmarks}
As TEBD does not precisely conserve the energy of the initial state, as a benchmark of the accuracy of our numerics we 
compute the relative error in the energy of the time-evolved state vector, 
$E(t) = \braket{\psi(t) |H | \psi(t)}$, computed with respect to its energy at time $t=0$. The relative error is defined as
\begin{align}
	\delta E(t) := \left| \frac{E(t) - E(t=0)}{E(t=0)}  \right|.
\end{align}
Fig.~\ref{fig.SM_energy} 
shows the relative error versus time for a variety of different disorder strengths. We find that in most cases, the relative error remains close to $\delta E \approx 0.001$, i.e., 
the energy is conserved up to an error of approximately one-tenth of a percent, confirming that our simulations are reliable.

\begin{figure}[h]
	\centering
	\includegraphics[width=\linewidth]{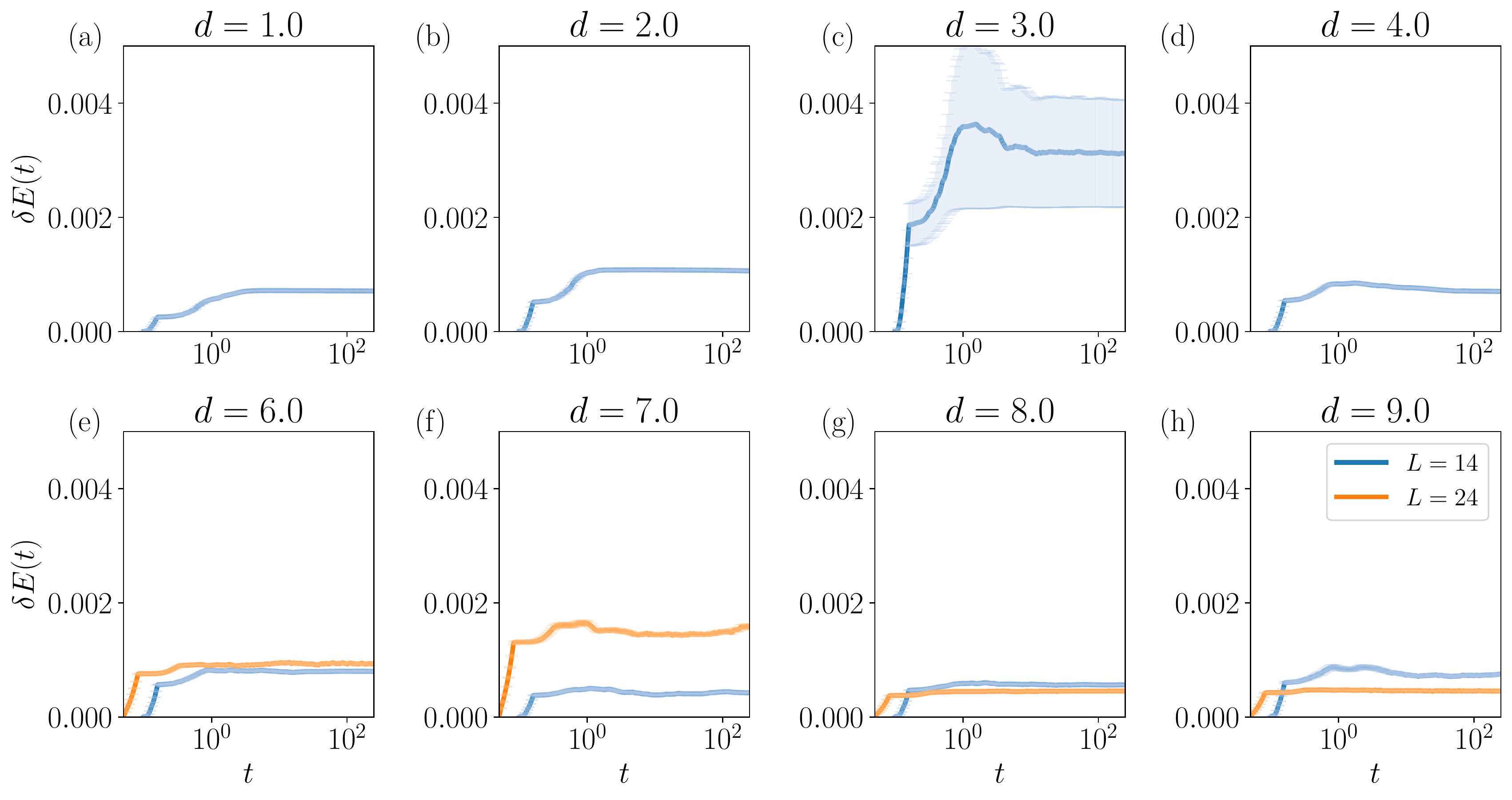}
	\caption{\textbf{Relative error of the simulation:} A comparison of the relative error in the energy of the time-evolved state for different values of the disorder strength $d$, shown for system sizes $L=14$ ($\chi=128$, averaged over $N_s=240$ disorder realisations) and - at strong disorder only - $L=24$ ($\chi=192$, averaged over $N_s=100$ disorder realisations). The relative error remains below $1\%$ for all disorder strengths. Error bars indicate the variance over disorder distributions and in most cases, are of comparable size to the plot markers.}
	\label{fig.SM_energy}
\end{figure}

\section{Supplementary note 2: Comparison of disorder strengths}

In addition to the data presented in the main text, here we show the behaviour of the entanglement negativity over a range of different disorder strengths, demonstrating the logarithmic growth at late times in the localised phase, and the qualitatively different behaviour seen in the delocalised phase. The results are shown in Fig.~\ref{fig.SM_comparison}, for a system of size $L=14$, bond dimension $\chi=128$, and averaged over $N_s=240$ disorder realisations. The circle markers represent the data points, while the solid lines are smoothed guides to the eye.
Deep in the delocalised phase ($d=1.0$), we see that the negativity saturates for all values of $r$ at relatively early times, making it difficult to pinpoint a regime where the growth of the negativity can be associated with a length scale. In contrast, the negativity in the localised phase increases much more slowly with time, and the spacing of the curves is consistent with an exponential suppression of the negativity with distance, as demonstrated in the main text.

\begin{figure}[h]
	\centering
	\includegraphics[width=\linewidth]{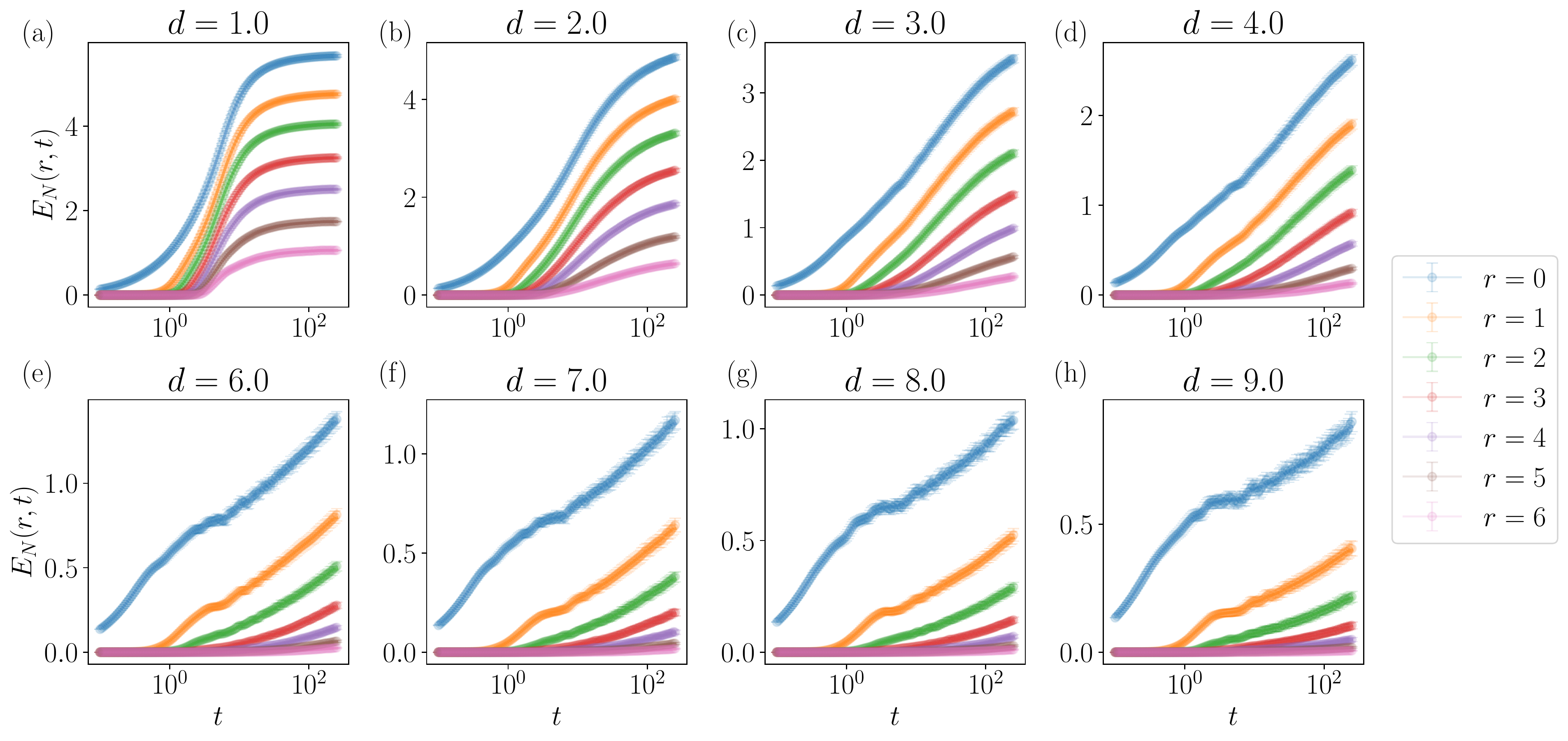}
	\caption{\textbf{Entanglement negativity at different disorder strengths: }A comparison of the dynamics of the entanglement negativity $E_N(r,t)$ for different values of the disorder strength $d$, shown for $L=14$ with bond dimension $\chi = 128$ and averaged over $N_s=240$ disorder realisations. In the delocalised phase, the negativity saturates to a value determined by the size of the subsystems $A$ and $B$, while in the localised phase the negativity displays a slow $\propto \log(t)$ growth even at late times. In this dephasing regime, we are able to use the data shown here to extract a length scale that characterises the localised phase, as detailed in the main text.}
	\label{fig.SM_comparison}
\end{figure}

\section{Supplementary note 3: Bond dimension}

In Fig.~\ref{fig.SM_chi}, we show the negativity dynamics for system size $L=24$ and varying bond dimension $\chi$, demonstrating that for $\chi=192$ (the choice used in the main text) the results are well-converged. The crucial factor for our work is the rate of growth of the negativity, which appears largely unaffected by the choice of bond dimension, although deviations can be seen for the smallest value shown in Fig.~\ref{fig.SM_chi}.

\begin{figure}[ht]
	\centering
	\includegraphics[width=0.8\linewidth]{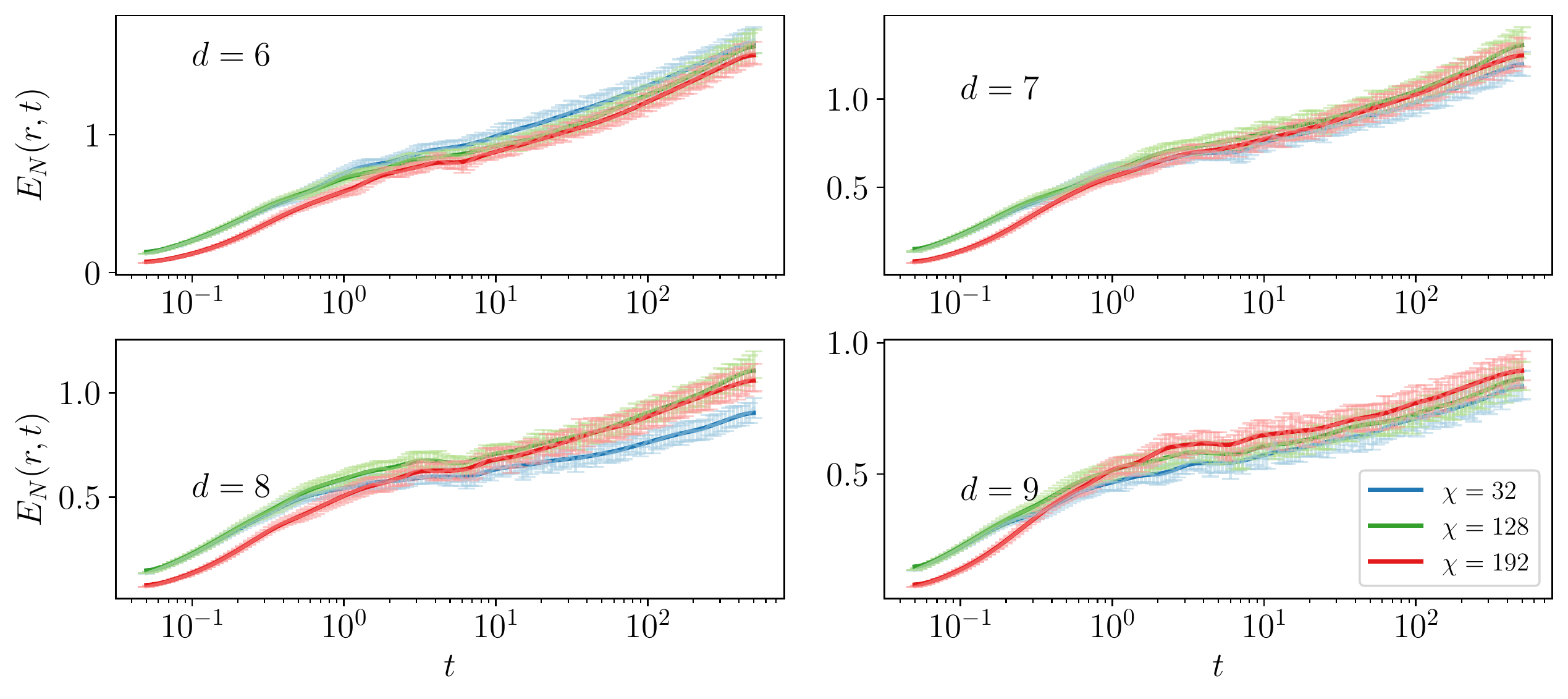}
	\caption{\textbf{Entanglement negativity for different bond dimensions and disorder strengths: }The dynamics of the negativity $E_N(r,t)$ with $r=0$, for different bond dimensions and disorder strengths. Data shown is for $L=24$, averaged over $N_s = 100$ disorder realisations. Error bars show the standard error.}
	\label{fig.SM_chi}
\end{figure}

\section{Supplementary note 4:Comparison of different measurement times}

In the main text, all results for the length scale $\xi \equiv \xi(t^*)$ are taken with $t^*=500$, i.e., 
the maximum evolution time of our simulations, however, it is clear from the negativity dynamics that there should be a weak dependence of $\xi$ on the measurement time $t^*$.
Here we demonstrate this effect. Figs.~\ref{fig.SM_fits} and \ref{fig.SM_fits2} show how the linear fit used to extract the decay of $E_N(r,t)$ against $r$ depends on the choice of time $t^*$ for a variety of different disorder strengths, and Fig.~\ref{fig.SM_time} shows how the resulting values of $\xi(t^*)$ change as $t^*$ and $d$ are varied. At short times there is a visible change in the length scale $\xi(t)$, however, at longer times we see that it appears to saturate towards a well-defined length scale with only a weak dependence on time. It is possible that simulations which extend to longer times may be able to improve upon the results presented here, but our results suggest this will be a meagre quantitative improvement in exchange for a great deal of computational effort.

\begin{figure}[ht]
	\centering
	\includegraphics[width=\linewidth]{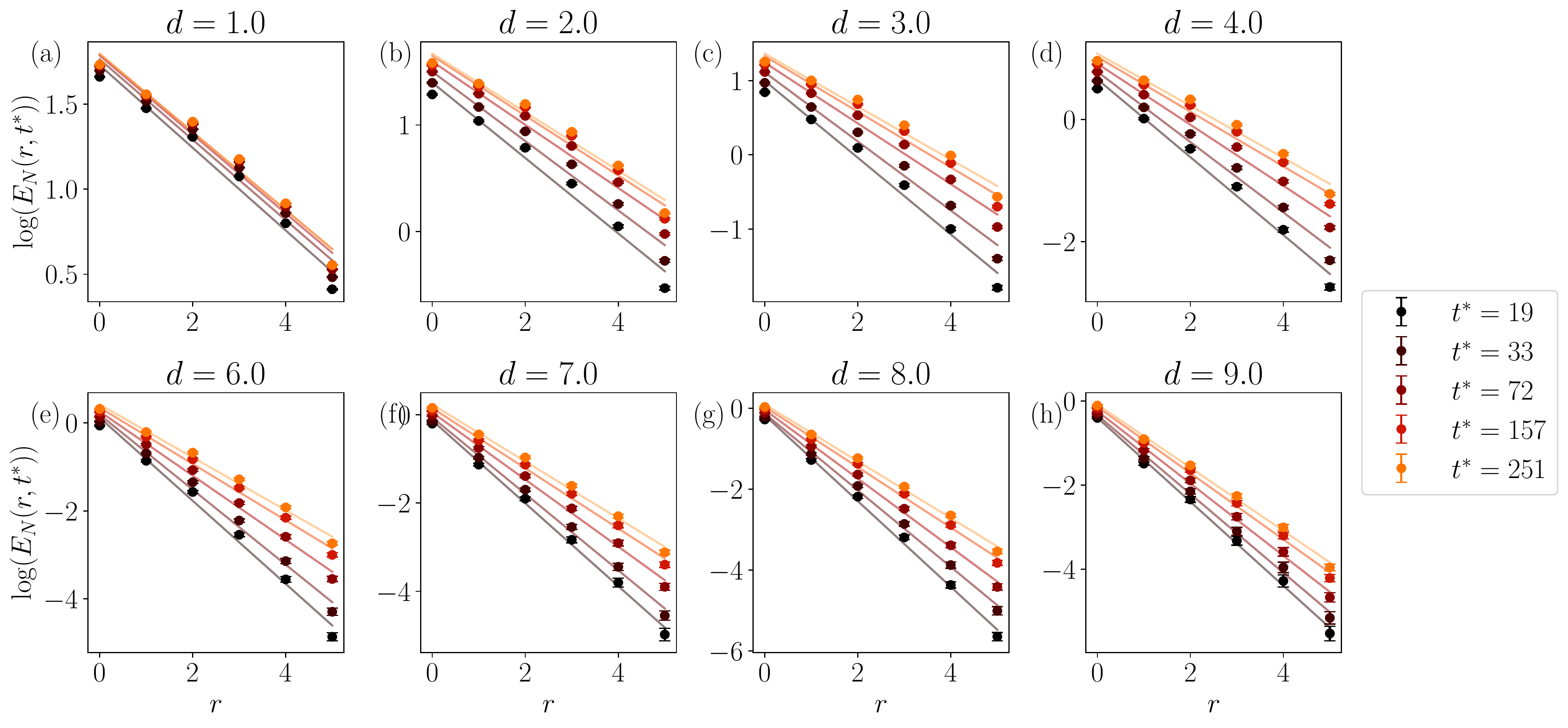}
	\caption{\textbf{Fits of the spatial decay of entanglement negativity for different times and disorder strengths: }A comparison of the fits to the entanglement negativity $E_N(r,t^*)$ at different times $t^*$ and for different values of the disorder strength $d$, shown for $L=14$ and averaged over $N_s=240$ disorder realisations. The circular markers show the data, while the solid lines indicate the linear fits used to extract the $l$-bit localisation length. Error bars showing the standard error in the mean are typically smaller than the marker size.
		At large disorder strengths (i.e., in the localised phase), the linear fit is very good, confirming that the negativity does indeed decrease exponentially with distance in this phase. After a time of $t^* \approx 100$, the gradient of the decay (and hence the corresponding $l$-bit localisation length) does 
		not strongly change with time in the localised phase.}
	\label{fig.SM_fits}
\end{figure}

\begin{figure}[ht]
	\centering
	\includegraphics[width=\linewidth]{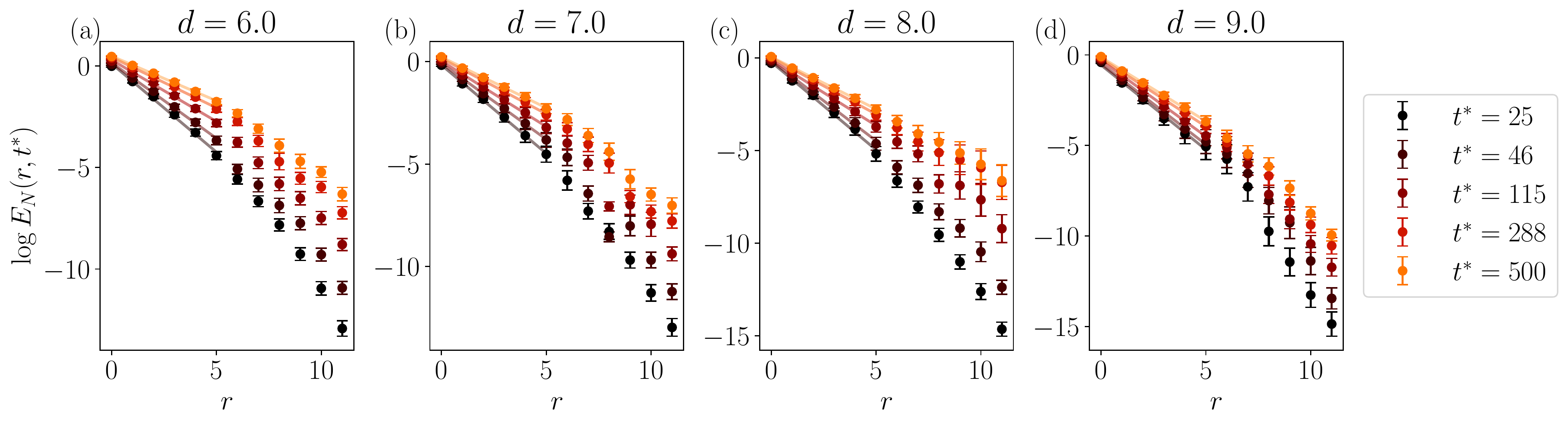}
	\caption{\textbf{Fits of the spatial decay of entanglement negativity for different times and disorder strengths: }The same as in Fig.~\ref{fig.SM_fits}, but for $L=24$, $\chi=192$ and averaged over $N_s=100$ disorder realisations, shown only in the localised phase. The solid lines indicate the range of points over which the exponential fits were performed in order to extract the length scale shown in Fig.~3 of the main text.}
	\label{fig.SM_fits2}
\end{figure}

\begin{figure}[ht]
	\centering
	\includegraphics[width=0.75\linewidth]{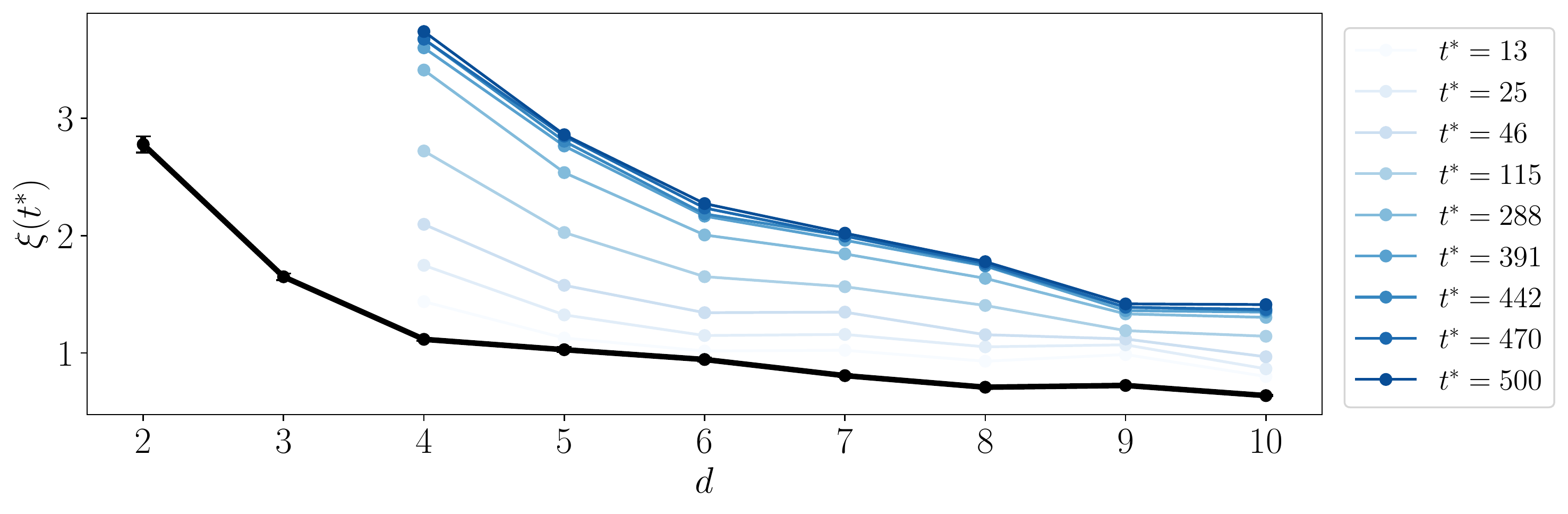}
	\caption{\textbf{Extracted length scales at different times: }A comparison of how the length scale $\xi(t^*)$ changes as the measurement time $t^*$ is varied. The results here are extracted from the fits shown in Fig.~\ref{fig.SM_fits2}. At short times, the value of $\xi(t^*)$ changes rapidly, however, at longer times, the dependence of $t^*$ weakens significantly. The black line is the Anderson localisation length for comparison, as discussed in the main text. For clarity, error bars are not shown except on the Anderson localisation length. (Note that error bars are shown in the data presented in the main text.)}
	\label{fig.SM_time}
\end{figure}

\section{Supplementary note 5: Negativity between subsystems of fixed size}

It is also possible to compute the entanglement negativity on a more general interval between two subsystems of fixed size $R_b$ separated by a distance $r$, as sketched in Fig.~\ref{fig.SM_disjoint}. In this case, a very similar procedure to that proposed in the main text is possible, with the caveat that one must carefully choose both the block size $R_b$ and the distance $r$ to ensure that the extraction of the $l$-bit length scale is done during the dephasing regime. 

To be specific, if the block size $R_b$ is small and the subsystems are close together (i.e., $r$ is small), then the entanglement negativity will rapidly saturate (as the maximum entanglement is controlled by the size of the subsystems under consideration). On the other hand, if the blocks are widely separated (i.e., large $r$), then the negativity will remain zero until times exponentially large in $r$. If one were to consider a small subsystem of $R_b = 2$, for example, then the entanglement negativity for small values of $r$ would saturate well before widely separated subsystems have time to become entangled, this meaning that the data points for both small $r$ and large $r$ would have to be discarded when performing the fit. This can be ameliorated by using blocks of intermediate size, such that they are large enough that the subsystem entanglement does not saturate too rapidly and the data at small values of $r$ remains reliable. Representative results for this case are shown in Fig.~\ref{fig.SM_disjoint2}, where it is clear that for block size $R_b=2$, the curves for small values of $r$ saturate too quickly to be used in extracting the $l$-bit length scale. Block size $R_b=3$ is better, and here $R_b=4$ offers the best compromise, with a clearly identifiable region where the curves for all values of $1 \leq r < 5$ are in a regime of logarithmic growth with approximately the same rate, and a length scale may be extracted. On the other hand, for block size $R_b=5$, with a system size of $L=16$ it is not possible to separate the blocks widely enough to extract enough data to perform a reliable fit. (Note that in these computations, we avoid subsystems that contain the two sites on each end of the chain in order to reduce finite-size effects. In addition, we average over all possible positions of the blocks with size $R_b$ separated by a distance $r$ within our system of size $L$.)

\begin{figure}[h]
	\centering
	\includegraphics[width=0.4\linewidth]{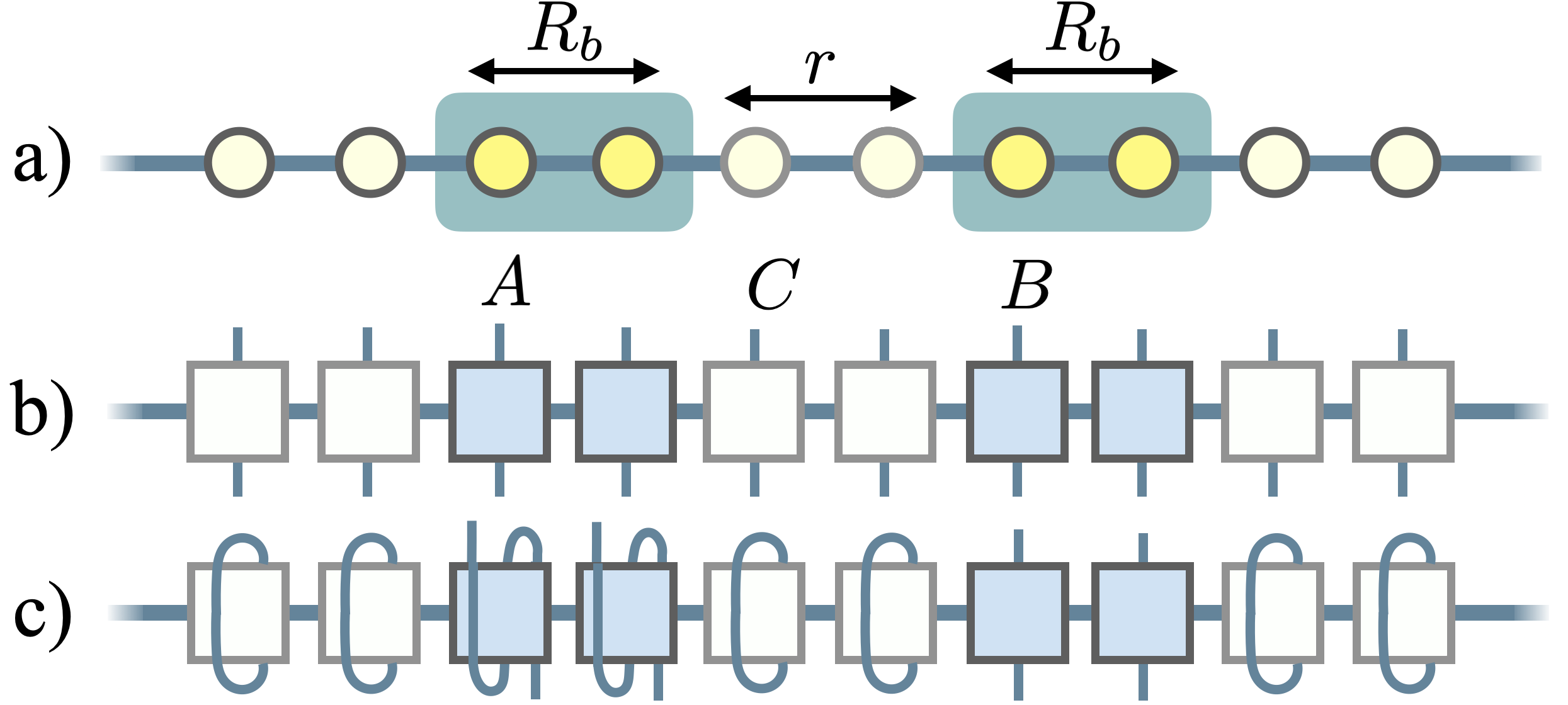}
	\caption{\textbf{Division into subsystems and computation of negativity: }A sketch of how the entanglement negativity can be used to quantify the entanglement between subsystems of fixed size $R_b$ separated by a distance $r$. a) A sketch of the spin chain, identifying the subsystems $A$ and $B$ and the relevant distances $R_b$ and $r$. b) A sketch of the corresponding density matrix in MPO form, identifying the subsystems. c) A sketch of how the negativity is computed in this case, tracing out the complement of $A$ and $B$ while still applying the same `twist' to the MPO legs in order to compute the partial transpose, as in Fig.~1 of the main text.}
	\label{fig.SM_disjoint}
\end{figure}

\begin{figure}[h]
	\centering
	\includegraphics[width=\linewidth]{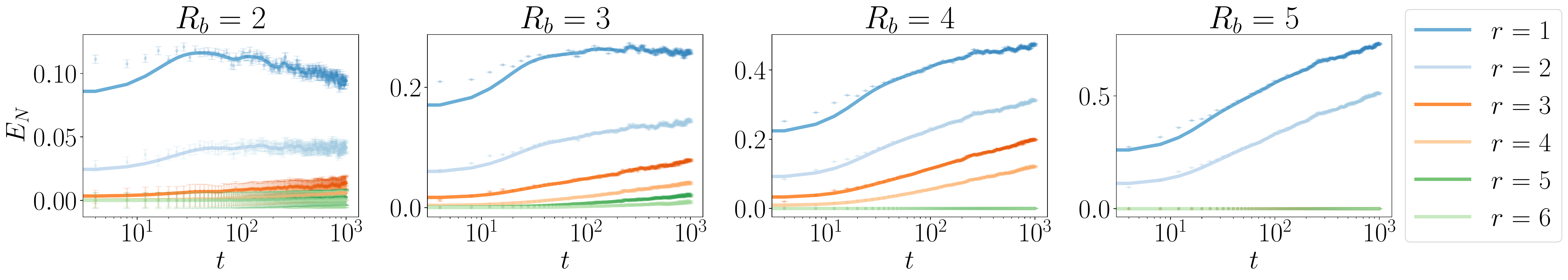}
	\caption{\textbf{Growth of negativity between subsystems of fixed size: }Entanglement negativity following a quench from a Néel state, shown for disorder strength $d=6.0$ with system size $L=16$, averaged over $N_s=96$ disorder realisations. Here we compute the negativity for a subsystem of fixed size $R_b:=|A|=|B|$, as sketched in Fig.~\ref{fig.SM_disjoint}. We can see that for small values of $R_b$, the negativity at small separations $r$ saturates quickly and that only for larger values of $R_b$ does the negativity increase in a manner that allows extraction of the relevant $l$-bit length scale. Error bars show the standard error in the mean, and the solid lines are a smoothed guide to the eye.}
	\label{fig.SM_disjoint2}
\end{figure}

\section{Supplementary note 6: Mutual information}

In the main text and in our analytical results, we specified the logarithmic negativity as our chosen spatially-resolved entanglement probe. Here we briefly demonstrate that another spatially-resolved entanglement measure, the \emph{mutual information}, also exhibits qualitatively similar behaviour.
The \emph{mutual information} between subsystems $A$ and $B$ is defined as
\begin{align}
	I(A:B) := S(\rho^A) + S(\rho^B) - S(\rho^{AB})
\end{align}
where $S(\rho) := -\tr \rho \log(\rho)$ represents the \emph{von Neumann entropy} 
of a quantum state $\rho$, and $\rho^A$ is the reduced quantum state of subsystem $A$. Fig.~\ref{fig.SM_MI} shows the results for a small system of size $L=12$ with bond dimension $\chi=128$, averaged over $N_s=100$ disorder realisations. The mutual information is qualitatively -- and even quantitatively, in many cases -- similar to the entanglement negativity, strongly suggesting that it would be a more than acceptable substitute and that much of the intuition developed in the main text should also apply to the mutual information. These findings are compatible with those using the
multi-partite mutual information as a probe for many-body localization
\cite{gooldTotalCorrelationsDiagonal2015,detomasiQuantumMutualInformation2017}.

\begin{figure}[h]
	\centering
	\includegraphics[width=\linewidth]{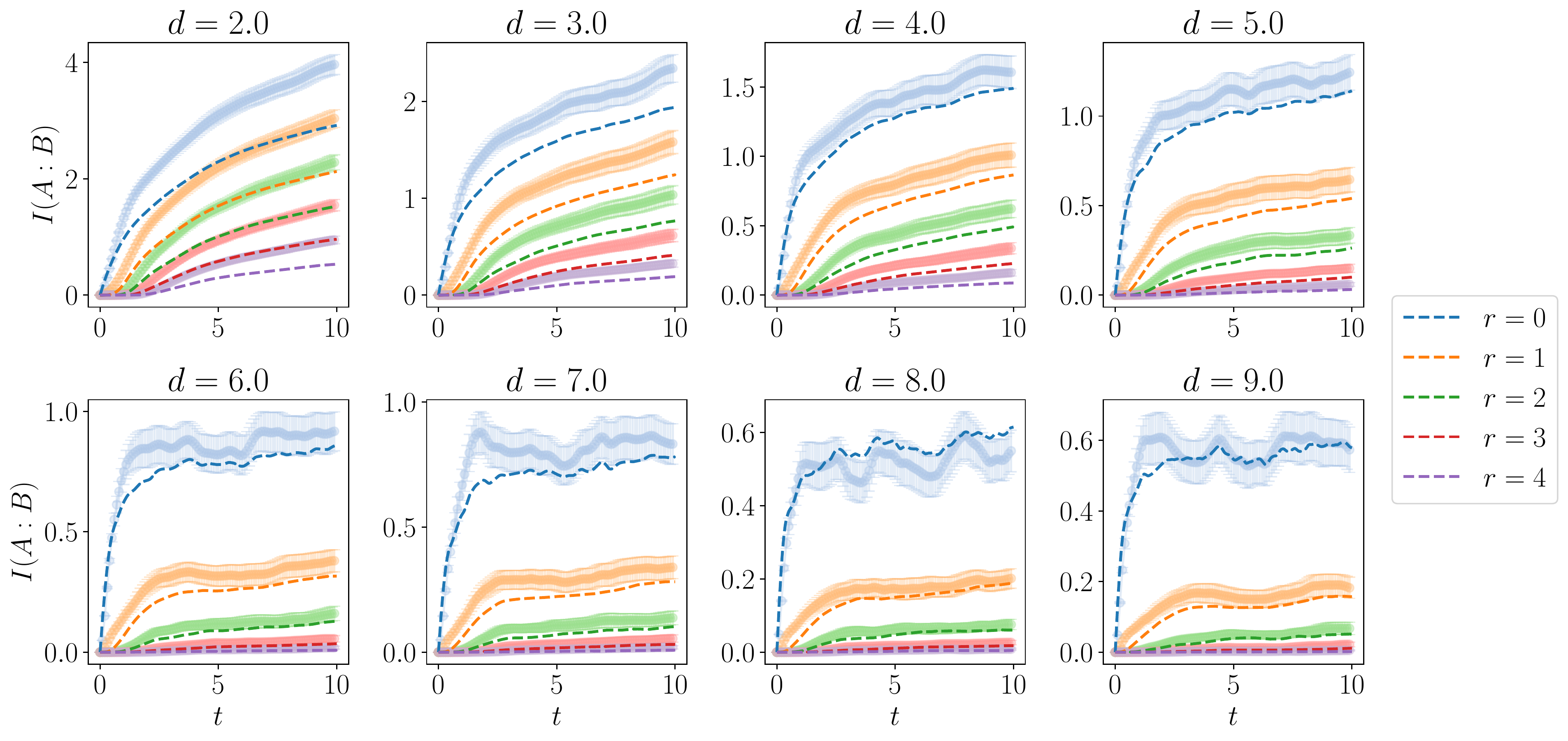}
	\caption{\textbf{Mutual information between separated regions over time: }A comparison of mutual information (solid lines) and entanglement negativity (dashed lines) between subsystems $A$ and $B$ separated by a distance $r$, for a system of size $L=12$ with bond dimension $\chi=128$, averaged over $N_s=100$ disorder realisations.}
	\label{fig.SM_MI}
\end{figure}
\section{Supplementary note 7: Computation of the $l$-bits and the quasi-locality measure}

The $l$-bits in this work are calculated using the method described in 
Ref.~\cite{goihlConstructionExactConstants2018}. For 
completeness, we offer a brief explanation of the method:
A system in the fully many-body localised phase can be fully characterized by a 
complete set of $l$-bits. Let $H$ be the MBL Hamiltonian of a system of size $N$ and $\tau_i,\ i=1,\dots,N$, 
be the $l$-bits, then they need to satisfy the following properties:
\begin{enumerate}
	\item $[H,\tau^z_i] = 0$,
	\item $[\tau^z_i,\tau^z_j] = 0$ for all $i$ and $j$,
	\item $\tau_i$ is quasi-local in real space (will be elaborated in the next section).
\end{enumerate}
A simple prescription in Ref.~\cite{kimLocalIntegralsMotion2014} has been given to construct the $l$-bits out of infinite time averages of terms that the Hamiltonian contains. The infinite-time average of a term $h_j$ is
\begin{equation}\label{eqn: inf_time_avg}
	\mathbbm{E}(h_j) = \lim_{T\rightarrow\infty}\frac{1}{T}\int_0^T e^{-iHt}\ h_j\ e^{iHt} \dd t = \sum_k \bra{E_k}h_j\ket{E_k} \ket{E_k}\bra{E_k},
\end{equation}
assuming a non-degenerate spectrum $\{E_k \}$ of $H$. Therefore, $\mathbbm{E}(h_j)$, being sums of projectors onto the eigenstates of $H$, automatically fulfill properties 1 and 2. The authors have also demonstrated the quasi-locality of the resulting operator, showing that the non-local contributions from the off-diagonal elements, i.e., contributions from $\tau^x$ and $\tau^y$ will be removed through the infinite-time averaging procedure. The downside with this approach is that the resulting operator has a degenerate spectrum that is distinct from what the Pauli algebra mandates. Thus, we can no longer associate this operator with the picture of having ladder operators that help us transverse through the different modes/eigenstates. 

The authors in Ref.~\cite{goihlConstructionExactConstants2018} have computed $\mathbb{E}(\sigma_j^z)$ from (\ref{eqn: inf_time_avg}) and re-arranged the order of the eigenstates in $U_d$ to minimize the pair-wise differences between the spectrum of $\mathbb{E}(\sigma_j^z)$ and that of 
\begin{equation}
	\tau^z_j = U^{\dagger}_d \sigma^z_j U_d, 
\end{equation}
starting with $j=1$ and sequentially optimising the eigenstate order with larger $j$ while keeping the already optimized partial ordering from smaller $j$ intact. The resulting operators will preserve the Pauli algebra by construction while becoming quasi-local in real space. For simplicity, we only deal with trace-less and Hermitian operators for the moment. Operators with non-vanishing traces require special procedures to meet the orthonormality of the Hilbert-Schmidt inner product. 

\begin{definition}[Quasi-locality]\label{def: quasi_local}
	Let $\tau$ be a trace-less Hermitian operator,  normalized with respect to the Frobenius norm, and $L_i$ its orthogonal trace-less Hermitian basis. Consider the decomposition
	\begin{equation}
		\tau = \sum_i a_i L_i.
	\end{equation} 
	Let $S(L_i)$ be the support of $L_i$ in real-space. $\tau$ is quasi-local around site $j$, if and only if for any connected region $B$ containing $j$, we have 
	\begin{equation}
		\sum_{S(L_i) \nsubseteq B} \abs{a_i}^2 \leq k \exp(-\text{dist}(j,B)/\xi).
	\end{equation}
\end{definition}

This definition, although rigorous, can be cumbersome, and it is not easy to compute
the relevant quantities. 
Let us consider the partial trace of the operator $\tau$,
\begin{equation}
	\tr_{B^C} (\tau) =  2^{\abs{B^C}}\sum_{S(L_i) \subseteq B} a_i \Tilde{L}_i,
\end{equation}
where $\Tilde{L}_i$ are now defined on a smaller Hilbert space contained in $B$ and the extra identity operators outside $B$ become the pre-factor $2^{\abs{B^C}}$ after the partial trace. Then, we can compute the square of the Frobenius norm of this truncated operator to get
\begin{equation}
	\norm{\tr_{B^C} (\tau)}^2_2 = 2^{2\abs{B^C}}\tr(\sum_{S(L_i) \subseteq B} a^*_i \Tilde{L}_i \sum_{S(L_j) \subseteq B} a_j \Tilde{L}_j) = 2^{2\abs{B^C}+\abs{B}}  \sum_{S(L_i) \subseteq B}  \abs{a_i}^2,
\end{equation}
due to the orthogonality of the basis operators $L_i$ and their restrictions to any region $B$. 
Finally, we leverage 
the normalisation of $\tau$ to observe that
\begin{equation}
	\sum_{S(L_i) \subseteq B}  \abs{a_i}^2 + \sum_{S(L_i) \nsubseteq B}  \abs{a_i}^2 = \norm{\tau}^2_2 = 1.
\end{equation}
Therefore, 
\begin{equation}\label{def: quasi_local_partial}
	\sum_{S(L_i) \nsubseteq B}  \abs{a_i}^2 = 1-\frac{1}{2^{\abs{B} +2\abs{B^C}}}\norm{\tr_{B^C}(\tau)}_{2}^2  \leq k \exp(-\text{dist}(j,B)/\xi).
\end{equation}
This is exactly the quasi-locality measure proposed in Refs.~\cite{goihlConstructionExactConstants2018, Chandran+15}. 
The spatial decay of the $l$-bits can also be computed using the weight measure in Eq.~(6) from Ref.~\cite{kulshreshthaBehaviorLbitsManybody2018}. This notion of quasi-locality is 
agnostic to operator content and where it is quasi-local around because one only 
needs to compute the weight of the operator filtered at each site. 

\section{Supplementary note 8: Entanglement growth bound}

In this section, we prove an entanglement growth bound that has 
been tailored for the case at hand. Generic 
entanglement growth bounds for the entanglement entropy 
have been proven for local Hamiltonians \cite{EntanglementRates,PhysRevLett.97.150404,PhysRevLett.127.020501,kimLocalIntegralsMotion2014}. 
Here, we prove novel bounds for the negativity, ones
that are tailored to be applicable
to many-body localised systems and the 
geometry at hand. We now look at the specific setting of having two regions, $A$ and $B$ 
separated by a 
region $C$ that has been traced out, and ask how much 
entanglement can be generated by a 
many-body localised Hamiltonian. 
We consider states evolving in time as
$\rho(t):= e^{-itH}\rho e^{itH}$.
For the purpose of this proof, we will consider a slightly strengthened definition of quasi-locality with respect to the definition in the main text, namely, instead of the normalized Frobenius norm, we will assume the $l$-bits are localised in the operator norm, i.e., the following.

\begin{definition}[Strong quasi-locality]\label{def:quasilocality_strong}
	An operator is said to be quasi-local around a region $R$, if for any region $X\subset R$,
	it  holds that
	\begin{equation}
		\left\|O- \frac{1}{2^{|X^c|}}\tr_{X^c}(O)\otimes \mathbb{I}_{X_c}\right\|_{\infty}^2\leq \|O\|_{\infty}^2Ke^{-d(R,X^c)/\xi}
		\label{eq.quasi-local_inf}
	\end{equation}
	for some constant $K>0$.
\end{definition}

This definition is stronger than the definition in the main text, in the sense that if an operator is quasi-local in the sense above, it is also quasi-local in the sense of the main text. For convenience, we repeat here the definition of a many-body localized Hamiltonian:
\begin{definition}[Many-body localisation]\label{def:mbl_hamiltonian_sm} A Hamiltonian 
	\begin{equation}
		H= \sum_{j=1}^n \omega_j^{(1)} h_j + \sum_{j,k=1}^n \omega^{(2)}_{j,k} h_j h_k + \dots
	\end{equation}
	with real weights $\{\omega_j^{(1)}\}$ and $\{\omega_{j,k}^{(2)}\}$,
	is called many-body localised if it can be written as a sum of mutually commuting ($[h_j,h_k]=0$
	for all $j,k$) quasi-local terms $h_j$, each centred around site $j$, and if $\omega_{i_1,\dots, i_n}\leq \omega e^{-|i_1-i_n|/\kappa}$, where $i_1< i_2\dots < i_n$.
\end{definition}

The following is the central statement from which the entanglement bound easily follows:
\begin{theorem}[Entanglement growth bound for sums of quasi-local operators] 
	\label{thm:ent_growth_sum}
	Let $\rho$ be any initial state. Let $H$ be a many-body localised Hamiltonian as per Definition \ref{def:mbl_hamiltonian_sm} with localisation length $\xi\leq {1}/{(4\log(2))}$ and $2(1/\kappa-\log(2))> 1/\xi$, consider three blocks $A,C,B$ such that $C$ divides $A$ from $B$ the
	growth of the  negativity of the state
	$\rho(t) = e^{-itH}\rho e^{it H}$ restricted to the regions $A,B$ for times $t\geq 0$ and for any $r\geq |C|/2$
	\begin{equation}
		E_N(t)=\log_2(\|\rho_{A,B}^{T_A}(t)\|_{1}) \leq 4r-2|C|+ tO(e^{-r/(2\xi)}).
	\end{equation}
\end{theorem}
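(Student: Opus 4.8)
The plan is to prove the bound by replacing the true dynamics with a \emph{truncated} dynamics in which every $l$-bit $h_j$ is cut off to a ball of radius $r$ around its centre, and then controlling the two resulting contributions separately: a time-independent geometric term coming from the finite interface on which the truncated evolution can act, and a time-dependent error term coming from the discarded tails. Concretely, let $\tilde h_j$ denote $h_j$ restricted to radius $r$ (via partial trace and re-tensoring with identity, as in \eqref{eq.quasi-local_inf}), and let $\tilde H$ be the Hamiltonian obtained by substituting $\tilde h_j$ for $h_j$ in every term of Definition~\ref{def:mbl_hamiltonian_sm}. Because all $h_j$, hence all $\tilde h_j$, commute, the truncated propagator factorises exactly into commuting pieces, which is what makes the interface argument clean.

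First I would establish the geometric term. Writing $\tilde H = \tilde H_{\mathrm{fact}} + \tilde H_{\mathrm{int}}$, where $\tilde H_{\mathrm{int}}$ collects those products of truncated $l$-bits whose support straddles the cut through $C$ and $\tilde H_{\mathrm{fact}}$ collects the rest, commutativity gives $e^{-it\tilde H} = e^{-it\tilde H_{\mathrm{fact}}}\,e^{-it\tilde H_{\mathrm{int}}}$, with $e^{-it\tilde H_{\mathrm{fact}}}$ a product of unitaries acting on the $A$-side and on the $B$-side with no cross-coupling. Such a factorised unitary cannot change the $A$--$B$ negativity of the initially product state, so all negativity is produced by $e^{-it\tilde H_{\mathrm{int}}}$, which is supported on an interface region $I$ of width $O(r)$ straddling $C$. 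Tracing out $C\subset I$ and applying $E_N(\sigma)\le \log_2\dim$ to the interface qubits lying in $A\cup B$ gives the time-independent bound $E_N^{\tilde H}(t)\le 4r-2|C|$, where $4r-2|C|$ counts those qubits once the radius-$r$ reach on each side of the width-$|C|$ region $C$ is tallied.

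Next I would bound the error from truncation. Using $\|\rho_{A,B}^{T_A}(t)\|_1 \le \|\tilde\rho_{A,B}^{T_A}(t)\|_1 + \|(\rho_{A,B}-\tilde\rho_{A,B})^{T_A}(t)\|_1$, the first term is the geometric piece $2^{4r-2|C|}$, and the task is to show $\|(\rho_{A,B}-\tilde\rho_{A,B})^{T_A}(t)\|_1\le t\,O(e^{-r/(2\xi)})$. The operator-norm error of a single truncated $l$-bit is $\|h_j-\tilde h_j\|_\infty\le \|h_j\|_\infty\sqrt{K}\,e^{-r/(2\xi)}$ by strong quasi-locality \eqref{eq.quasi-local_inf}; propagating this through the products $h_jh_k\cdots$ and summing against the weights $\omega_{i_1,\dots,i_m}\le\omega e^{-|i_1-i_m|/\kappa}$ controls $\|H-\tilde H\|_\infty$, and hence, via $\|e^{-itH}-e^{-it\tilde H}\|_\infty\le t\|H-\tilde H\|_\infty$, the state difference linearly in $t$. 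The number of diameter-$D$ product terms grows like $2^D$, so the weighted sum converges and is dominated by the crossing scale precisely when $2(1/\kappa-\log 2)>1/\xi$; this is where that hypothesis is consumed. Taking $\log_2$ of $\|\rho_{A,B}^{T_A}(t)\|_1\le 2^{4r-2|C|}+t\,O(e^{-r/(2\xi)})$ and using $\log_2(a+b)\le\log_2 a + b/(a\ln 2)$ collapses the two pieces into the stated $E_N(t)\le 4r-2|C|+t\,O(e^{-r/(2\xi)})$.

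The main obstacle I anticipate is the error step, specifically controlling $\|(\rho_{A,B}-\tilde\rho_{A,B})^{T_A}(t)\|_1$ without paying the full Hilbert-space dimension: the partial transpose does not preserve the trace norm, so a naive continuity estimate would multiply the small operator-norm difference by a factor exponential in the system size. Overcoming this requires arguing that the relevant state difference is effectively localised to the $O(r)$-wide interface, so that any dimensional prefactor is only $2^{O(r)}$ and can be absorbed into the geometric term, or that a multiplicative rather than additive trace-norm estimate can be used. The second, more bookkeeping-heavy difficulty is the combinatorial estimate of the weighted operator sum, where the interplay between the number of crossing product terms, the weight decay $e^{-D/\kappa}$, and the per-$l$-bit truncation error $e^{-r/(2\xi)}$ must be balanced to yield exactly the condition $2(1/\kappa-\log 2)>1/\xi$.
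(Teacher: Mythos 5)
Your skeleton (truncate the $l$-bits, count interface dimensions for the geometric term $4r-2|C|$, control the discarded tails linearly in $t$) matches the paper's intuition, but the step you yourself flag as the main obstacle is a genuine gap, and neither of the fixes you sketch closes it. The additive split $\|\rho_{A,B}^{T_A}(t)\|_1 \le \|\tilde\rho_{A,B}^{T_A}(t)\|_1 + \|(\rho_{A,B}-\tilde\rho_{A,B})^{T_A}(t)\|_1$ combined with $\|e^{-itH}-e^{-it\tilde H}\|_\infty\le t\|H-\tilde H\|_\infty$ fails twice over: (i) $H-\tilde H$ collects tail corrections from \emph{every} $l$-bit in the chain, so by your own triangle-inequality estimate $\|H-\tilde H\|_\infty = O(N\,e^{-r/(2\xi)})$ is extensive in the system size $N$ (and the regions in the theorem are semi-infinite), not $O(e^{-r/(2\xi)})$; and (ii) even granting a small trace-norm bound on the state difference, passing to the partial transpose costs $\|X^{T_A}\|_1\le d_A\|X\|_1$ with $d_A=2^{|A|}$ exponential in $|A|$, not in $r$. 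The paper's proof is built precisely to evade both problems, and the mechanism is absent from your proposal: it splits $H$ into only \emph{three} commuting sums $h_{\bar L}, h_{\bar \del}, h_{\bar R}$; isolates each tail in the interaction picture, $e^{ih_{\bar S}t}=e^{iH_S t}U^T_S(t)$; telescopes each tail into \emph{local} pieces $T_S=\sum_q Z_{S+q+1}$ with $\|Z_{S+q}\|\le O(e^{-(r+q)/(2\xi)})$ (Lemma~\ref{lem:tail}, which needs the margin-$r$ geometry of the regions so that the sum over $l$-bits inside $\bar S$ converges); and then bounds the Dyson series of $U^T_S(t)$ term by term via $\|(Z_I O)^{T_R}\|_1\le d_{R\cap I}^2\|Z_I\|\,\|O^{T_R}\|_1$ (Eq.~\eqref{eq:ent_neg_product}), yielding the \emph{multiplicative} estimate $\|(U^T_S\rho\,(U^T_S)^\dagger)^{T_R}\|_1\le \exp(t\,O(e^{-r/(2\xi)}))\|\rho^{T_R}\|_1$ of Lemma~\ref{lem:IEB}. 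There the dimension factors $d_{R\cap(S+q)}^2\sim 2^{2q}$ are beaten by the decay $e^{-q/(2\xi)}$ exactly because $\xi<1/(4\log 2)$ --- a hypothesis your proposal never uses. This multiplicative bound is the ``multiplicative rather than additive trace-norm estimate'' you hope for in your last paragraph, but constructing it is the substance of the proof, not a detail to be absorbed.

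Your geometric step also does not survive scrutiny as stated: truncating each \emph{individual} $l$-bit to radius $r$ does not confine the crossing part $\tilde H_{\mathrm{int}}$ to an $O(r)$ interface, because product terms $h_{i_1}h_{i_2}\cdots h_{i_k}$ with $i_1$ deep in $A$ and $i_k$ deep in $B$ straddle the cut with support of unbounded width; their weights decay like $e^{-|i_1-i_k|/\kappa}$ but do not vanish, so $e^{-it\tilde H_{\mathrm{fact}}}$ is not free of cross-coupling. The paper handles this by first regrouping all product terms into sums $H_i$ that are quasi-local around single sites at the price of doubling the localisation length (Lemma~\ref{lem:quasi-local-sums}) --- this regrouping, not the error-term convergence, is where $2(1/\kappa-\log 2)>1/\xi$ is consumed --- and the factor $2^{4r-2|C|}$ then arises from two applications of the bound $\|\sigma^{T_X}\|_1\le d_X\|\sigma\|_1$ on overlap regions of length $2r-|C|$, not from a support-counting argument on the crossing Hamiltonian.
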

Notice the the $r$ in the above is a parameter one can choose freely.
We obtain the following
statement, morally, by picking $r=2\xi\log(t)$ in the above bound if $t> e^{|C|/(4\xi)}$ and $r=|C|/2$ otherwise. Some additional details are required in order not to consider a time dependent $r$ in the proof of Theorem \ref{thm:ent_growth_sum}.

\begin{corollary}\label{cor:final_bound}[Logarithmic growth of the negativity]
	If $t\geq e^{|C|/{4\xi}}$, under the assumptions of Theorem \ref{thm:ent_growth_sum}, we have
	\begin{equation}
		E_N(t)\leq \min\{t\,O(e^{-|C|/(4\xi)}), 8\xi\log(t)-2|C|\}+
		O(1),
	\end{equation}
	while for $t<e^{|C|/{4\xi}}$,
	\begin{equation}
		E_N(t)\leq t\,O(e^{-|C|/(4\xi)})
	\end{equation}
\end{corollary}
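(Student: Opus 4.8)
The plan is to obtain Corollary~\ref{cor:final_bound} directly from Theorem~\ref{thm:ent_growth_sum} by optimising the free parameter $r$ in the bound $E_N(t)\leq 4r-2|C|+t\,O(e^{-r/(2\xi)})$, which is valid for every admissible $r\geq |C|/2$. Since this inequality holds for each such $r$ separately, any choice yields a correct upper bound on the \emph{same} quantity $E_N(t)$; I can therefore make several choices and take their minimum. The whole argument is thus a matter of selecting $r$ well in each time regime.

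For the short-time regime $t<e^{|C|/(4\xi)}$ I would take $r=|C|/2$, the smallest admissible value. This kills the geometric term, $4r-2|C|=0$, and leaves $E_N(t)\leq t\,O(e^{-|C|/(4\xi)})$, which is exactly the claimed linear-in-$t$ short-time behaviour. For the long-time regime $t\geq e^{|C|/(4\xi)}$ I would produce the two branches of the minimum from two separate choices. The first branch again uses $r=|C|/2$, giving once more $E_N(t)\leq t\,O(e^{-|C|/(4\xi)})$. The logarithmic branch comes from $r=2\xi\log(t)$; the admissibility constraint $r\geq|C|/2$ holds precisely when $t\geq e^{|C|/(4\xi)}$, which is exactly what pins down the crossover time. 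With this choice $4r-2|C|=8\xi\log(t)-2|C|$ while $t\,O(e^{-r/(2\xi)})=t\,O(e^{-\log t})=O(1)$, so that $E_N(t)\leq 8\xi\log(t)-2|C|+O(1)$. Both bounds being valid, $E_N(t)$ is at most their minimum, and pulling the nonnegative $O(1)$ outside the minimum via $\min\{a,b+c\}\leq\min\{a,b\}+c$ yields the stated form.

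The main obstacle is the one flagged in the remark before the corollary: the choice $r=2\xi\log(t)$ makes $r$ depend on the evaluation time $t$, whereas the proof of Theorem~\ref{thm:ent_growth_sum} treats $r$ as a fixed truncation radius for the quasi-local terms $h_j$ held constant throughout the evolution on $[0,t]$. I would therefore need to confirm that the constant hidden in $O(e^{-r/(2\xi)})$ is genuinely uniform in $r$ — it should descend only from the strong-quasi-locality constant $K$ of Definition~\ref{def:quasilocality_strong} and the coupling-decay rate $\kappa$, neither of which depends on $r$, and in particular should carry no polynomial prefactor in $r$ that, at $r=2\xi\log(t)$, would degrade the clean $O(1)$ into a stray $\mathrm{poly}(\log t)$ factor. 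The cleanest way to make this rigorous is to read Theorem~\ref{thm:ent_growth_sum} as a family of bounds indexed by the pair $(r,t)$ and, for each target time $t$, to select the \emph{fixed} value $r(t)=2\xi\log(t)$ and run the time-evolution argument once with that constant radius, rather than letting $r$ vary dynamically within a single run. Once this uniformity is checked, the remaining optimisation is entirely elementary.
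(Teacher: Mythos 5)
Your proposal is correct and takes essentially the same route as the paper: the paper likewise reads Theorem~\ref{thm:ent_growth_sum} as a family of bounds indexed by $(r,t)$, applies it with $r=|C|/2$ when $t<e^{|C|/(4\xi)}$ and with $r\approx 2\xi\log(t)$ otherwise, and takes the minimum of the two resulting bounds. The only difference is that the paper chooses the \emph{integer} $r^*$ with $e^{r^*/(2\xi)}\leq t\leq e^{(r^*+1)/(2\xi)}$ (so that $r$ remains a valid lattice radius), which costs only a factor $e^{1/(2\xi)}$ absorbed into the $O(1)$ --- exactly the uniformity-in-$r$ issue you correctly flag and resolve at the end.
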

\begin{proof}
	For the $t<e^{|C|/{4\xi}}$ case, one need only pick $r=|C|/2$. Otherwise, for any time $t$, let $r^*$ be the positive integer such that $e^{r^*/(2\xi)}\leq t \leq e^{(r^*+1)/(2\xi)}$. Then $r^*\leq 2\xi\log(t)\leq r^*+1$. Picking $r=r^*$ we get 
	\begin{equation}
		E_N(t)= \leq 4r^*-2|C|+ tO(e^{-r^*/(2\xi)})\leq 8\xi\log(t)-2|C|+ O(t\,e^{\log(t)+1/(2\xi)})=8\xi\log(t)-2|C|+O(1),
	\end{equation}
	since the bound with $r=|C|/2$ still holds, one can pick the minimum between these two bounds.
\end{proof}
To turn this result into a proper bound for many-body localised Hamiltonians, we need to reduce such Hamiltonians to the form considered in Theorem \ref{thm:ent_growth_sum}. For this purpose, we will assume a slightly stronger definition of the many-body localised Hamiltonian. We call a unitary quasi-local with localisation length $\xi$ if it maps any local operator on a region $R$ to a quasi-local operator around $R$ with localisation length $\xi$. We will assume that the Hamiltonian is diagonalised by a quasi-local unitary, which means that the $l$-bits are simply dressed Pauli $Z$ operators, $h_i=U\sigma_z^i U^{\dagger}$. In particular this implies that a product of $l$-bits $h_{i_1}h_{i_2}\dots h_{i_n}$, with $i_1<i_2<\dots < i_n$ is quasi-local around $\{i_1,\dots, i_n\}$. 
For the purposes of the subsequent discussion, it will be useful to define the projector 
\begin{equation}
	\mathcal P_X(O)= \tr_{X^c}(O)\otimes \frac{I}{2^{|X^c|}},
\end{equation}
for any region $X$ of the lattice. It is easy to verify that $\mathcal P_X$ is a projector and that it is self-adjoint in the Hilbert Schmidt inner product.  In what follows, $\|\cdot\|$ denotes the operator norm.

\begin{lemma}[Quasi-local sums]\label{lem:quasi-local-sums}
	Let $H$ be a many-body localised Hamiltonian in the sense described above with localisation length $\xi$, and in addition, assume $2(1/\kappa-\log(2))> 1/\xi$, where $\kappa$ has been defined in Definition \ref{def:mbl_hamiltonian_sm}. Then the Hamiltonian can be written as 
	\begin{equation}
		H=\sum_{i=1}^N H_i
	\end{equation}
	where $H_i$ is quasi-local around the site $i$ with localisation length $2\xi$.
\end{lemma}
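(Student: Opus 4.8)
The plan is to prove the decomposition by \emph{assigning each product of $l$-bits to its leftmost site} and then checking that the resulting operator is quasi-local around that site, with the localisation length degrading by exactly a factor of two. Since each $h_j=U\sigma_z^jU^\dagger$ squares to the identity and the $h_j$ mutually commute, every monomial collapses to a product over a \emph{subset}, so I would first rewrite $H=\sum_{S\subseteq\Lambda}\omega_S\prod_{j\in S}h_j$ with $|\omega_S|\le \omega\,e^{-\operatorname{diam}(S)/\kappa}$, where $\operatorname{diam}(S)=\max S-\min S$. I would then define $H_i:=\sum_{S:\min S=i}\omega_S\prod_{j\in S}h_j$, so that $H=\sum_i H_i$ holds by construction. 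By the remark preceding the lemma, each monomial $\prod_{j\in S}h_j$ is quasi-local around its support $S$ with localisation length $\xi$ and has unit operator norm.

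To verify that $H_i$ is quasi-local around $i$ with localisation length $2\xi$ in the sense of Definition \ref{def:quasilocality_strong}, I would fix a region $X\ni i$, write $\ell:=d(i,X^c)$, and estimate $\|H_i-\mathcal{P}_X(H_i)\|_\infty$ by the triangle inequality, splitting the sum over $S$ (with $\min S=i$) according to $m:=\operatorname{diam}(S)$. For the \textbf{short-range} terms with $m\le \ell/2$, the triangle inequality $d(s,X^c)\ge d(i,X^c)-d(i,s)\ge \ell-m$ (valid for every $s\in S$, in any direction) gives $d(S,X^c)\ge \ell/2$, so quasi-locality of the single monomial yields $\|\prod_{j\in S}h_j-\mathcal{P}_X(\prod_{j\in S}h_j)\|_\infty\le \sqrt{K}\,e^{-\ell/(4\xi)}$; this halving of the usable distance is precisely where the localisation length doubles. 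The coefficient sum $\sum_{S:\min S=i}|\omega_S|\le \omega\sum_{m\ge0}2^m e^{-m/\kappa}$ is finite exactly because $1/\kappa>\log 2$ (implied by the hypothesis $2(1/\kappa-\log 2)>1/\xi>0$), where I used that there are $2^{m-1}\le 2^m$ subsets with $\min S=i$ and diameter $m$.

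For the \textbf{long-range} terms with $m>\ell/2$ the monomial bound is useless, since the support may reach $X^c$; here I would instead use that $\mathcal{P}_X$ is the Haar twirl $O\mapsto\int (I_X\otimes U)\,O\,(I_X\otimes U^\dagger)\,dU$ over $X^c$, hence an operator-norm contraction, which gives $\|\prod_{j\in S}h_j-\mathcal{P}_X(\prod_{j\in S}h_j)\|_\infty\le 2$. The smallness then comes entirely from the coefficients: using $2\cdot2^{m-1}=2^m$ and $2^m e^{-m/\kappa}=e^{-m(1/\kappa-\log 2)}$, the contribution is at most $\omega\sum_{m>\ell/2}e^{-m(1/\kappa-\log 2)}$, a geometric tail bounded by $C'e^{-(\ell/2)(1/\kappa-\log 2)}$. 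The hypothesis $2(1/\kappa-\log 2)>1/\xi$ is exactly what forces $e^{-(\ell/2)(1/\kappa-\log 2)}\le e^{-\ell/(4\xi)}$, matching the short-range rate. Combining the two cases gives $\|H_i-\mathcal{P}_X(H_i)\|_\infty\le C\,e^{-d(i,X^c)/(4\xi)}$ with $C$ independent of $X$, and squaring yields quasi-locality around $i$ with localisation length $2\xi$.

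The step I expect to be the main obstacle is the long-range sum, and it is what pins down the precise hypotheses: the combinatorial growth $2^{m-1}$ of multi-body terms of diameter $m$ must be defeated by the coefficient decay $e^{-m/\kappa}$, which is why $1/\kappa>\log 2$ is needed for convergence and why the strengthened inequality $2(1/\kappa-\log 2)>1/\xi$ is needed to beat the target rate $e^{-\ell/(4\xi)}$. A secondary point to handle is the normalisation in Definition \ref{def:quasilocality_strong}: the estimate above is absolute, so one must either lower-bound $\|H_i\|_\infty$ and absorb it into the constant $K$, or read the quasi-locality bound in its absolute form, which is all the downstream entanglement growth estimate actually uses.
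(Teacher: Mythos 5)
Your proof is correct and takes essentially the same route as the paper's: both decompose $H$ by assigning each $l$-bit monomial to an anchor site (yours the leftmost site, the paper's the midpoint of its support), then control $\|H_i-\mathcal{P}_X(H_i)\|_\infty$ by the triangle inequality, using monomial quasi-locality for terms supported well inside $X$ and defeating the $2^{m}$ combinatorial growth of the far terms with the coefficient decay $e^{-m/\kappa}$, which is exactly how the hypotheses $1/\kappa>\log 2$ and $2(1/\kappa-\log 2)>1/\xi$ enter in the paper as well. The only differences are bookkeeping: your two-regime split at diameter $\ell/2$ gives the rate $e^{-\ell/(4\xi)}$, i.e.\ precisely the claimed length $2\xi$, while the paper's finer geometric-series treatment (with $\delta=2(1/\kappa-\log 2)-1/(2\xi)>0$) yields the slightly stronger unsquared rate $e^{-r/(2\xi)}$, and your closing caveat about the $\|H_i\|_\infty$ normalisation corresponds to the paper's terse remark ``noticing $\|h_i\|=1$''.
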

We note that, a quasi-local operator with localisation length $\xi$ is quasi-local for any localisation length $\xi'>\xi$, if 
$2(1/\kappa-\log(2))\leq  1/\xi$, it suffices to relax the quasi-locality of the $l$-bits by increasing the localisation length until this condition is satisfied, i.e., 
choose
\begin{equation}
	\xi'=\frac{1}{2(1/\kappa-\log(2))}+\epsilon >\xi
\end{equation}
for some constant $\epsilon>0$. 
This will result in faster ($\sim r/\xi')$, but still exponentially slow, growth of entanglement in the bound.
\begin{proof}
	Define $H_i$ as
	\begin{equation}
		H_i=\omega_i h_i + \sum_{l\geq 1} \sum_{k=2}^{2l+1}\sum_{I_k\in B_k(i,l)} \omega_{I_k}h_{I_k}
	\end{equation}
	where $I_k=\{i_1,\dots, i_k\}$ 
	are collections of $k$ sites contained in $i-l\dots i+l$, containing the left boundary and either the right boundary or the site immediately to the left of it. We denote the set of all such $I_k$ as $B_k(i,l)$ (c.f. Figure \ref{fig:lbit_grouping}).
	\begin{figure}[h]
		\centering
		\includegraphics[width=.3\textwidth]{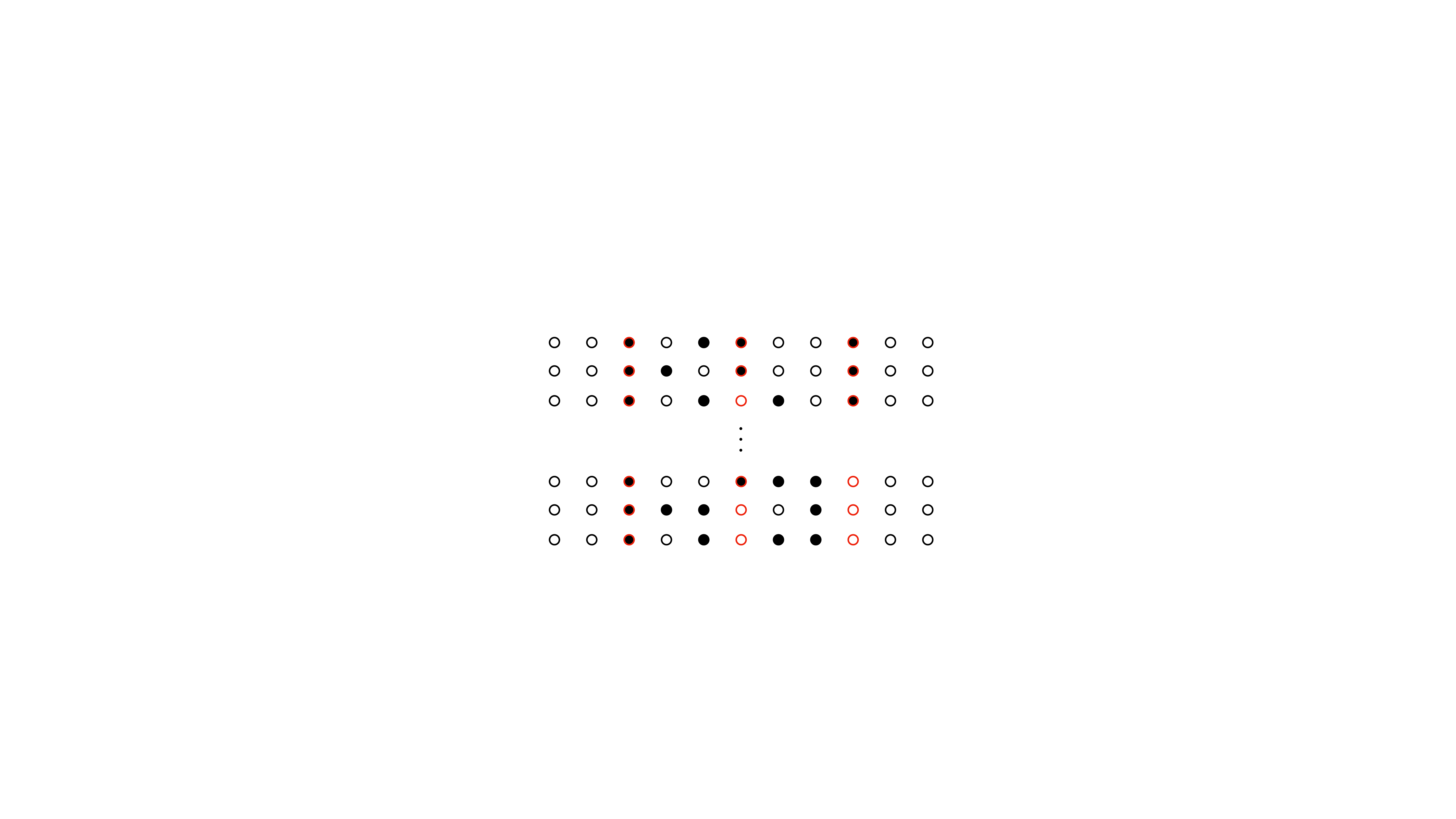}
		\caption{\textbf{Examples of sets of sites contained in $B_k(i,l)$: }In this example, $k=4$ and $l=3$. The sites $i$, $i+l$, $i-l$ are highlighted in red.}
		\label{fig:lbit_grouping}
	\end{figure}In the above, we have defined $h_{I_k}=h_{i_1}\dots h_{i_k}$. We have then $\omega_{I_k}^2\leq \omega^2 e^{-(2l-1)/\kappa}$ and that $h_{I_k}$ is quasi-local around $[i-l,i+l]$ with localisation length $\xi$. We have 
	$H=\sum_i H_i$, it remains to show that the $H_i$ are quasi-local as claimed. Let $R=[i-r,i+r]$ be a stretch of sites of radius $r$ around $i$. We have, by the triangle inequality,
	\begin{equation}
		\begin{aligned}
			\|H_i-\mathcal P_R(H_i)\| &= \omega_i \|h_i-\mathcal P_R(h_i)\| + \sum_{l\geq 1} \sum_{k=2}^{2l+1}\sum_{I_k\in B_k(i,l)} \omega_{I_k} \|h_i-\mathcal P_R(h_i)\|\\& \leq \omega K e^{-r/\xi} +K \sum_{l=1}^{r-1}\sum_{I_k\in B_k(i,l)} \omega e^{-(2l-1)/\kappa}e^{-(r-l)/(2\xi)} + 2\sum_{l\geq r} \sum_{I_k\in B_k(i,l)}\omega e^{-(2l-1)/\kappa},
		\end{aligned}
	\end{equation}
	where we have used quasi-locality and in the $l\geq r$ part of the sum, that $\|h_i-\mathcal P_R(h_i)\|\leq 2$.
	Now notice that
	\begin{equation}
		\sum_{I_k\in B_k(i,l)}=2\cdot 2^{2l-2}=2^{2l-1} \leq 2^{2l} ,
	\end{equation}
	we then have that
	\begin{equation}
		\begin{aligned}
			\|H_i-\mathcal P_R(H_i)\| &\leq \omega K e^{-r/(2\xi)} +K \sum_{l=1}^{r-1}2^{2l} \omega e^{-2l/\kappa}e^{-(r-l)/(2\xi)} + 2\sum_{l\geq r} 2^{2l}\omega e^{-2l/\kappa}\\&=\omega K e^{-r/(2\xi)} +K\sum_{l=1}^{r-1} \omega e^{-2l(1/\kappa-\log(2))}e^{-(r-l)/(2\xi)} + 2\omega \sum_{l\geq r} e^{-2l(1/\kappa-\log(2))}.
		\end{aligned}
	\end{equation}
	The first term is already bounded as needed, and for the third term, we use $2(1/\kappa-\log(2)) > 1/(2\xi)$, to get 
	\begin{equation}
		\sum_{l\geq r} e^{-2l(1/\kappa-\log(2))}\leq \sum_{l\geq r} e^{-l/(2\xi)}= \frac{1}{1-e^{-1/(2\xi)}}e^{-r/(2\xi)}.
	\end{equation}
	For the second term, let $\delta= 2(1/\kappa-\log(2))-1/(2\xi) >0$. Then 
	\begin{equation}
		\sum_{l=1}^{r-1} e^{-2l(1/\kappa-\log(2))}e^{-(r-l)/(2\xi)} =e^{-r/(2\xi)} \sum_{l=1}^{r-1} e^{-\delta l}\leq  \frac{e^{-\delta}}{1-e^{-\delta}}e^{-r/(2\xi)}.
	\end{equation}
	We note in passing that if $\delta=0$, the bound does not diverge, but an additional linear term is added to get a bound of the form $re^{-r\xi}$. In conclusion,
	\begin{equation}
		\|H_i-\mathcal P_R(H_i)\|\leq K' e^{-r/(2\xi)}
	\end{equation}
	with 
	\begin{equation}
		K':=\omega\left(K+K\frac{e^{-\delta}}{1-e^{-\delta}}+\frac{2}{1-e^{-1/\xi}}\right),
	\end{equation}
	that is,
	\begin{equation}
		\|H_i-\mathcal P_R(H_i)\|^2\leq K'^2 e^{-r/\xi}
	\end{equation}
	which is the definition of quasi-locality noticing $||h_i||=1$.
\end{proof}

The rest of this section is dedicated to proving Theorem \ref{thm:ent_growth_sum}. The technique used for the proof is analogous to that of Ref.~\cite{kimLocalIntegralsMotion2014}, where an analogous bound was derived for the entanglement entropy in the case $|C|=0$, the properties of the negativity, which is a meaningful entanglement measure for mixed states, allow the generalisation to disconnected regions. The following two properties of the $1$-norm and the partial transpose will be used repeatedly:
\begin{itemize}
	\item  The $1-$norm contracts under the partial trace, 
	i.e., $||\rho_A||_1\leq ||\rho_{A,B}||_1$ for 
	any bipartite state $\rho_{A,B}$ \cite{Rastegin_2012},
	\item $||\rho_{A,B}^{T_B}||_1\leq d_{B}||\rho_{A,B}||_1$ \cite{Ando2008}.
\end{itemize}

We will now prove Theorem \ref{thm:ent_growth_sum}.
From now on we will assume that $|C|$ is even, the odd case is analogous. We will label the two central sites of $C$ as $\pm 1$, all the sites to the right of the center of $C$ have positive integer labels, and to the left negative integer labels (notice that there is no site $0$). In addition, we will use the standard notation $[a,b)$ denoting real intervals to denote intervals in the chain, so from now on $[a,b)$ is understood to mean $[a,b)\cap\mathbb Z/\{0\}$.

Divide the chain into the following regions (see Fig.~\ref{fig:regions}), recall that $r$ is an arbitrary integer greater than $|C|/2$.
\begin{align}
	&\bar L=(-\infty, -r+|C|/2), \quad L=(-\infty,|C|/2),\\
	&\bar \del=[-r+|C|/2,r-|C|/2], \quad \del=[-2r+|C|/2, 2r-|C|/2],\\
	&\bar R=(r-|C|/2, \infty), \quad R=(-|C|/2,\infty),
\end{align}
and divide the terms in the Hamiltonian accordingly as
\begin{equation}
	H=h_{\bar L}+h_{\bar R}+h_{\bar \del}
\end{equation}
where $h_{\bar S}=\sum_{i\in S} h_i$ for $S\in{L,\del,R}$. We will now split the term $h_{\bar S}$ into a main local part acting on $S$ and a tail whose norm is exponentially small in $r$.
\begin{figure}[h!]
	\centering
	\includegraphics[width=.7\textwidth]{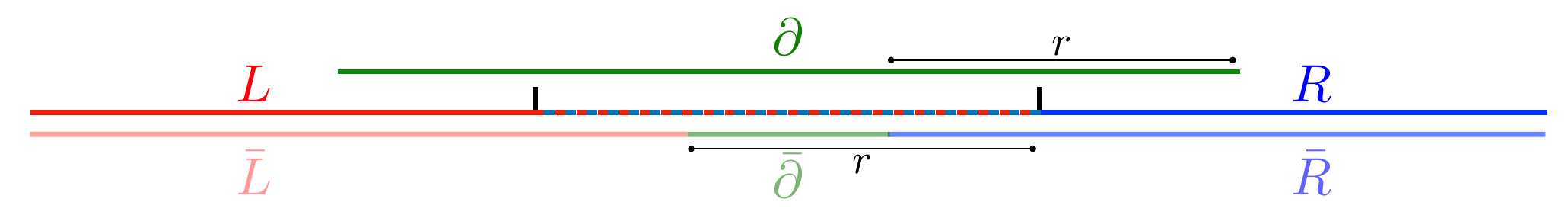}
	\caption{\textbf{Division of the chain into regions for the proof of Theorem \ref{thm:ent_growth_sum}:} The central region bounded by the vertical lines is $C$, represented by the dashed red and blue line. Note that $C$ is contained in both $L$ and $R$. }
	\label{fig:regions}
\end{figure}

More specifically, we can isolate the tail evolution as follows
\begin{lemma}
	Let $H_S=\mathcal P_S(h_{\bar S})$ and define the corresponding tail as $T_S=h_{\bar{S}}-H_S$, then 
	\begin{equation}
		e^{ih_{\bar S}t}=e^{i H_St}U^T_S(t)
	\end{equation}
	where the tail evolution is given by
	
	\begin{equation}
		U_S^T(t)=\mathcal T\left(\exp\left(\int_{0}^t \mathrm d\tau \, \tilde T_S(\tau)\right)\right)=\sum_{k=0}^{\infty} i^k\int_0^t\mathrm dt_1\,\dots \int_0^{t_{k-1}} \mathrm{d}t_{k} \tilde T_S(t_1)\dots \tilde T_S(t_k)
	\end{equation}
	where 
	\begin{equation}
		\tilde T_S(\tau)=e^{-iH_S\tau}T_Se^{iH_S\tau}.
	\end{equation}
\end{lemma}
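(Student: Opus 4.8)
The plan is to recognise this as the standard interaction-picture (Dyson series) factorisation, in which $e^{iH_S t}$ plays the role of the ``free'' evolution generated by the strictly $S$-local part $H_S=\mathcal P_S(h_{\bar S})$ while $U_S^T(t)$ collects the remaining ``tail'' evolution. First I would simply \emph{define} $U_S^T(t):=e^{-iH_S t}\,e^{ih_{\bar S}t}$, so that the claimed factorisation $e^{ih_{\bar S}t}=e^{iH_S t}U_S^T(t)$ holds by construction, with $U_S^T(0)=\mathbb{I}$. The entire content of the lemma then lies in showing that this explicitly defined object admits the stated time-ordered-exponential representation.

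Second, I would derive the differential equation obeyed by $U_S^T$. Since $H_S$ and $T_S=h_{\bar S}-H_S$ are Hermitian (the $l$-bits are Hermitian with real weights and $\mathcal P_S$ preserves Hermiticity), $e^{iH_S t}$ is unitary, and differentiating the definition while using $e^{iH_S t}e^{-iH_S t}=\mathbb{I}$ gives
\begin{equation}
\frac{\mathrm{d}}{\mathrm{d}t}U_S^T(t)= e^{-iH_S t}\,i(h_{\bar S}-H_S)\,e^{ih_{\bar S}t}= i\,e^{-iH_S t}T_S e^{iH_S t}\,e^{-iH_S t}e^{ih_{\bar S}t}= i\,\tilde T_S(t)\,U_S^T(t),
\end{equation}
where in the last step I recognised both the conjugated tail $\tilde T_S(t)=e^{-iH_S t}T_S e^{iH_S t}$ and the definition of $U_S^T$. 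Thus $U_S^T$ solves the linear operator ODE $\dot U_S^T=i\,\tilde T_S(t)\,U_S^T$ with $U_S^T(0)=\mathbb{I}$.

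Third, I would identify this solution with the time-ordered exponential. Because the chain is finite-dimensional and conjugation by the unitary $e^{iH_S\tau}$ preserves the operator norm, one has $\|\tilde T_S(\tau)\|=\|T_S\|$ for all $\tau$, so the Dyson series $\sum_{k\geq 0}i^k\int_0^t\!\mathrm{d}t_1\cdots\int_0^{t_{k-1}}\!\mathrm{d}t_k\,\tilde T_S(t_1)\cdots\tilde T_S(t_k)$ has $k$-th term of norm at most $\|T_S\|^k t^k/k!$ and converges absolutely in operator norm, summing to at most $\exp(t\|T_S\|)$. Differentiating the series term by term — the $t$-derivative of the $k$-th term acts on the outermost upper limit $t_1=t$ and reproduces $i\,\tilde T_S(t)$ times the $(k-1)$-th term — shows it satisfies the same ODE with the same initial condition, so by uniqueness of solutions of a linear ODE (Picard–Lindelöf / Gr\"onwall) it coincides with $U_S^T(t)$. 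This is precisely the claimed expression.

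I do not expect a genuine obstacle here; the only points demanding care are bookkeeping rather than ideas. The time-ordering convention must be fixed so that the factor carrying the \emph{latest} time sits on the left, matching the left multiplication in $\dot U_S^T=i\,\tilde T_S U_S^T$, and the paper's unusual $e^{+iHt}$ sign convention must be carried consistently through the differentiation so that the generator of $U_S^T$ emerges as $+i\,\tilde T_S$. Convergence is automatic in finite dimension, so no analytic subtleties arise; the substantive work is deferred to the subsequent bounds, where the exponential smallness of $\|T_S\|$ (from quasi-locality of $h_{\bar S}$) will be exploited to control $U_S^T(t)$.
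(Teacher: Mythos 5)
Your proposal is correct and follows essentially the same route as the paper: define $U_S^T(t):=e^{-iH_St}e^{ih_{\bar S}t}$ so the factorisation is automatic, differentiate to obtain the linear ODE generated by $i\,\tilde T_S(t)$, and identify the solution with the Dyson series (the paper simply cites this last step as well known, whereas you spell out convergence and uniqueness). As a minor point in your favour, you carry the factor of $i$ consistently — the paper's displayed ODE drops it on the right-hand side, though its Dyson series with the $i^k$ factors shows the intended generator is indeed $i\,\tilde T_S(t)$.
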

\begin{proof}
	We have $U^T_S(t)=e^{-iH_St}e^{ih_St}$, hence
	\begin{equation}
		\frac{\mathrm{d}}{\mathrm{d}t}U^T_S(t)= -iH_Se^{-iH_St}e^{ih_{\bar{S}}t}+ie^{-iH_St}h_Se^{ih_{\bar{S}}t}= ie^{-iH_St}Te^{ih_{\bar{S}}t}=\tilde T_S(t) U^T_S(t)
	\end{equation}
	this is the differential equation satisfied by the time evolution operator with the time-dependent Hamiltonian $\tilde T_S(t)$, it is well known that its solution is given by the Dyson series in the statement.
\end{proof}
Notice that the tail evolution is an operator acting globally on the lattice, hence it could still, in principle, create an amount of entanglement extensive in the system size. The intuition is that it is generated by $\tilde T_S$, which has exponentially small norm. We need to formalize the idea that this leftover tail evolution cannot create much entanglement.
For each $S$, define $S+q$ as the region $S$ extended to the left and the right (if possible) by $q$ sites. We can then write the tail $T_S$ as a telescopic sum, defining $H_{S+q}= \mathcal P_{S+q}(h_{\bar S})$
\begin{equation}
	T_S= -H_S+H_{S+1}-H_{S+1}+H_{S+2}-H_{S+2}\dots= -\sum_{q=0}^{\infty} H_{S+q}-H_{S+q+1}.
\end{equation}
Clearly $h_{\bar S}= \lim_{q\to \infty} \mathcal P_{S+q}(h_{\bar S})$, as this is simply the ``truncation" where the whole lattice is kept.
We then have that 
\begin{equation}
	T_S=\sum_{q=0}^{\infty} Z_{S+q+1}
\end{equation}
where $Z_{S+q}:= H_{S+q}-H_{S+q+1}$ acts only on $S+q+1$. Until now, we have claimed that these tail operators have exponentially small norms in $r$. Before continuing, let us prove this.

\begin{lemma}[Tail bounds]\label{lem:tail}
	$||Z_{S+q}||\leq O(e^{-(r+q)/(2\xi)})$.
\end{lemma}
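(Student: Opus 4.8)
The plan is to reduce the operator-norm bound on $Z_{S+q}=\mathcal P_{S+q}(h_{\bar S})-\mathcal P_{S+q+1}(h_{\bar S})$ to the truncation errors of the individual quasi-local terms making up $h_{\bar S}$, and then to control the resulting exponential sums geometrically. Writing $h_{\bar S}=\sum_{i\in \bar S} h_i$ and using linearity of $\mathcal P_X$ together with the triangle inequality, it suffices to bound, for each single term, the quantity $\|(\mathcal P_{S+q}-\mathcal P_{S+q+1})(h_i)\|$. Since $\mathcal P_X(h_i)$ is precisely the object appearing in the strong quasi-locality estimate of Definition~\ref{def:quasilocality_strong}, I would first rewrite
\begin{equation}
	(\mathcal P_{S+q}-\mathcal P_{S+q+1})(h_i)=\bigl(h_i-\mathcal P_{S+q+1}(h_i)\bigr)-\bigl(h_i-\mathcal P_{S+q}(h_i)\bigr),
\end{equation}
so that each summand is literally a truncation error of the quasi-local operator $h_i$. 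Each is then bounded by $\|h_i\|\sqrt K\,e^{-d(i,(S+q)^c)/(2\xi)}$ using the quasi-locality guaranteed by Lemma~\ref{lem:quasi-local-sums} and $\|h_i\|=1$, where I have used $d(i,(S+q+1)^c)\ge d(i,(S+q)^c)$ to express both terms with the same exponent.

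The second step is the geometric heart of the argument: to show that $d(i,(S+q)^c)\ge r+q$ for every $i\in \bar S$, uniformly in the three cases $S\in\{L,\partial,R\}$. This is exactly where the choice of regions pays off, as $\bar S$ sits a distance $r$ inside $S$ while $S+q$ adds a further shell of width $q$. For the semi-infinite regions $L$ and $R$ only one boundary is relevant and the bound follows immediately from the interval endpoints; for the finite middle region $\partial$ one checks the inequality against both the left and right boundaries of $\partial+q$. Crucially, a site $i$ lying at depth $m$ inside $\bar S$ obeys the stronger estimate $d(i,(S+q)^c)\ge r+q+m$, which is what makes the subsequent site sum summable independently of $r$.

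The final step is to sum. Combining the two previous estimates,
\begin{equation}
	\|Z_{S+q}\|\le 2\sqrt K\sum_{i\in \bar S} e^{-d(i,(S+q)^c)/(2\xi)}\le 4\sqrt K\,e^{-(r+q)/(2\xi)}\sum_{m\ge 0}e^{-m/(2\xi)},
\end{equation}
where the factor $2$ accounts for the two truncation errors and the remaining constant for the (at most) two edges of $\bar S$ from which the depth $m$ is measured. The geometric series $\sum_{m\ge0}e^{-m/(2\xi)}=(1-e^{-1/(2\xi)})^{-1}$ is finite and $r$-independent, yielding $\|Z_{S+q}\|\le O(e^{-(r+q)/(2\xi)})$ as claimed.

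I expect the main obstacle to be bookkeeping rather than conceptual. The two places that require care are verifying the depth-$m$ distance bound in each geometry -- in particular confirming that the large-but-finite region $\partial$, whose size grows like $r$, contributes no spurious polynomial-in-$r$ prefactor -- and checking that the site sum over the semi-infinite regions $L,R$ genuinely converges; both are controlled by the exponential decay of the single-term truncation error with depth. A secondary point to track is the consistency of the localisation-length constant inherited from Lemma~\ref{lem:quasi-local-sums}, since a residual factor of two in the exponent would merely rescale $\xi$ and leave the stated $O(e^{-(r+q)/(2\xi)})$ form of the bound intact.
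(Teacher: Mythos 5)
Your proof is correct and takes essentially the same approach as the paper's: both reduce $\|Z_{S+q}\|$ via the triangle inequality to truncation errors of the individual quasi-local terms $h_i$, apply the strong quasi-locality decay, and sum the resulting geometric series using that every site of $\bar S$ lies at distance at least $r+q$ from $(S+q)^c$ (with the finite middle region contributing only a constant, edge-counting factor). The only cosmetic difference is that the paper bounds $\|h_{\bar S}-H_{S+q}\|$ and $\|h_{\bar S}-H_{S+q+1}\|$ as wholes before adding them, whereas you combine the two truncation errors site by site.
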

\begin{proof}
	We will prove that $||h_{\bar{S}}-H_{S+q}||\leq O(e^{-(r+q)/(2\xi)})$, the statement follows as
	\begin{equation}
		||Z_{S+q}||\leq  ||H_{S+q}-H_{S+q+1}||\leq (||H_{S+q}-h_{\bar S}|| + ||H_{S+q+1}-h_{\bar S}||) .
	\end{equation}
	Let us start with $S=\del$. The closest $l$-bits to the boundary of $\del+q$ in $\bar \del$ are separated from it by $r+q$ sites, and the furthest ones by $2r+q$. The sum is clearly symmetric around the centre of the chain, and we use that for any $S$, $S$ is constructed such that the distance from any site in $\bar S$ to the edge of $\bar S$ is at least $r$. Recall that $h_{\bar S}=\sum_{i\in \bar S}h_i$, where the $h_i$ are quasi-local as in Definition \ref{def:quasilocality_strong}, so that 
	\begin{equation}
		||h_{\bar{\del}}-H_{\del+q}||\leq \sum_{k=-r+|C|/2}^{r-|C|/2}||h_k-\mathcal{P}_{\del+q} (h_k)||\leq 2K\sum_{j=r+q}^{2r+q-|C|/2} e^{-j/(2\xi)}\leq O(e^{-(r+q)/(2\xi)})
	\end{equation}
	for some constant $K>0$, 
	by the definition of quasi-locality. Let us move to $S=R$, where again the closest l-bits in $\bar R$ to the boundary of $R+q$ are separated from it by $r+q$ sites, and the furthest has a distance unbounded in the system size. We have
	\begin{equation}
		||h_{\bar{R}}-H_{R+q}||\leq \sum_{k>r-|C|/2}||h_k-\mathcal{P}_{R+q} (h_k)||\leq 2K\sum_{j\geq r+q} e^{-j/(2\xi)}\leq O(e^{-(r+q)/(2\xi)}).
	\end{equation}
	$S=L$ can be treated entirely analogously.
\end{proof}
We have completed the split into local terms and tails. The intuition is now that the entanglement in the system will be driven on the one hand by the term $H_{\del}$, which is the only local term connecting $A$ and $B$, and on the other hand by the tail. The former can only create entanglement locally around $C$, and the latter is weak. Next, we need to formalize the notion that the tails can only create weak entanglement.

\begin{lemma}[Tails create weak entanglement.]\label{lem:IEB}
	Let $R$ be any region of the chain, then
	\begin{equation}
		||(U_S(t)\rho (U_S^T(t))^{\dagger})^{T_R}||_1\leq O\left(\exp\left(2e^{-r/(2\xi)} C_{R,S}t\right)\right)||\rho^{T_R}||_1
	\end{equation}
	with $C_{R,S}=\sum_{q=0}^{\infty} e^{-q/(2\xi)}d_{R\cap (S+q)}^2$.
\end{lemma}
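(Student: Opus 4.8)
The plan is to convert the claimed bound into a differential inequality for the quantity $f(t) := \|\sigma(t)^{T_R}\|_1$, where $\sigma(t) := U_S^T(t)\,\rho\,(U_S^T(t))^{\dagger}$ is the tail-evolved state, and then integrate it via Grönwall's inequality. First I would use the defining ODE of the tail evolution, $\tfrac{\mathrm d}{\mathrm dt}U_S^T(t) = i\,\tilde T_S(t)\,U_S^T(t)$, to obtain the Heisenberg-type equation $\dot\sigma = i[\tilde T_S(t),\sigma]$. Since $\|\cdot\|_1$ is a norm and the partial transpose is linear, the upper Dini derivative of $f$ satisfies $D^+ f(t)\le \|(\dot\sigma)^{T_R}\|_1 = \|[\tilde T_S(t),\sigma(t)]^{T_R}\|_1$, so the whole problem reduces to bounding the trace norm of the partially transposed commutator at each fixed time.

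Next I would exploit the telescopic decomposition of the tail established just above, $T_S=\sum_{q\ge 0}Z_{S+q}$, so that $\tilde T_S(\tau)=\sum_{q\ge0}\tilde Z_q(\tau)$ with $\tilde Z_q(\tau)=e^{-iH_S\tau}Z_{S+q}e^{iH_S\tau}$. The key structural observation is that, since $H_S$ is supported on $S\subseteq S+q$, each $\tilde Z_q(\tau)$ remains supported on $S+q$, and by unitary invariance of the operator norm $\|\tilde Z_q(\tau)\|_\infty=\|Z_{S+q}\|_\infty$, which Lemma~\ref{lem:tail} controls by $O(e^{-(r+q)/(2\xi)})$. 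By the triangle inequality it then suffices to bound $\|[\tilde Z_q,\sigma]^{T_R}\|_1$ term by term and sum over $q$.

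The heart of the argument, and the step I expect to be the main obstacle, is a local bound quantifying how multiplying $\sigma$ by an operator $K$ supported on a region $M$ can increase $\|\sigma^{T_R}\|_1$. Here I would expand $K$ in an orthonormal Weyl (generalized Pauli) operator basis $\{P_\mu\}_{\mu=1}^{d_{M\cap R}^2}$ on the transposed part $M\cap R$, writing $K=\sum_\mu P_\mu\otimes Q_\mu$ with $Q_\mu$ acting on $M\cap R^c$; choosing the $P_\mu$ unitary gives $\|P_\mu\|_\infty=1$ and $\|Q_\mu\|_\infty\le\|K\|_\infty$. The crucial algebraic fact is that for a factor acting only on $R$ the partial transpose moves it (transposed) to the opposite side, $((X_R\otimes I)\sigma)^{T_R}=\sigma^{T_R}(X_R^{T}\otimes I)$, while a factor on $R^c$ is left untouched; hence $(K\sigma)^{T_R}=\sum_\mu (I\otimes Q_\mu)\,\sigma^{T_R}\,(P_\mu^{T}\otimes I)$, and submultiplicativity of the trace norm yields $\|(K\sigma)^{T_R}\|_1\le d_{M\cap R}^2\,\|K\|_\infty\,\|\sigma^{T_R}\|_1$. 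The squared dimension, rather than a single power, is exactly what the definition of $C_{R,S}$ anticipates, and it arises from the $d_{M\cap R}^2$ terms in the operator-basis expansion. The same estimate holds for $(\sigma K)^{T_R}$, so $\|[K,\sigma]^{T_R}\|_1\le 2\,d_{M\cap R}^2\,\|K\|_\infty\,\|\sigma^{T_R}\|_1$.

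Finally I would assemble the pieces. Applying the local bound with $K=\tilde Z_q$ and $M\cap R = R\cap(S+q)$, and inserting $\|Z_{S+q}\|_\infty\le O(e^{-r/(2\xi)})e^{-q/(2\xi)}$, the differential inequality becomes $D^+ f(t)\le 2\,O(e^{-r/(2\xi)})\big(\sum_{q\ge0}e^{-q/(2\xi)}d_{R\cap(S+q)}^2\big)f(t)=2\,O(e^{-r/(2\xi)})\,C_{R,S}\,f(t)$, with the overall constant absorbed into the $O(\cdot)$. Grönwall's inequality then integrates this to $f(t)\le O\big(\exp(2e^{-r/(2\xi)}C_{R,S}\,t)\big)\,\|\rho^{T_R}\|_1$, which is the claim. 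The only remaining points requiring care are the regularity justifying the Dini-derivative argument (immediate since $\tilde T_S$ is bounded on the finite lattice) and the interchange of the infinite $q$-sum with the norm estimate, both of which are routine given the absolute convergence guaranteed by the tail bounds.
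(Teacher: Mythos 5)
Your proof is correct, but it reaches the bound by a genuinely different mechanism than the paper. The shared skeleton is the same: you use the telescopic decomposition $\tilde T_S(\tau)=\sum_{q\ge 0}\tilde Z_{S+q}(\tau)$, the fact that conjugation by $e^{iH_S\tau}$ preserves both the support $S+q$ and the operator norm, the tail norms from Lemma~\ref{lem:tail}, and a local estimate of the form $\|(Z_I O)^{T_R}\|_1\le d_{R\cap I}^2\,\|Z_I\|\,\|O^{T_R}\|_1$. Where you diverge is in how these pieces are assembled and in how the local estimate is proved. The paper expands the full tail evolution $U_S^T(t)$ as a Dyson series, applies the triangle inequality to every order, peels off the $\tilde Z_{S+q}$ factors one by one using its inequality (its Eq.~\eqref{eq:ent_neg_product}, derived from the cited facts that the $1$-norm contracts under partial trace and that $\|\rho^{T_B}\|_1\le d_B\|\rho\|_1$), and then resums $\sum_{k,n}\frac{t^k t^n}{k!\,n!}(\cdot)^{k+n}$ into the exponential. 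You instead derive the ODE $\dot\sigma=i[\tilde T_S(t),\sigma]$, bound the Dini derivative of $f(t)=\|\sigma(t)^{T_R}\|_1$ by the commutator estimate, and integrate via Gr\"onwall; and you prove the local $d^2$ estimate from scratch by expanding the local operator in a unitary operator basis on $R\cap(S+q)$ and commuting the $R$-supported factors through the partial transpose. Your route buys simplicity: it avoids the infinite double sum over Dyson orders and its convergence bookkeeping, peels off only one tail operator at a time, and is self-contained (no appeal to the Ando/Rastegin inequalities); the paper's route makes completely explicit where the factor $\exp(2t\sum_q\|Z_{S+q}\|d^2_{R\cap(S+q)})$ comes from. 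The constants agree: your factor $2$ arises from the two terms of the commutator, the paper's from the left and right Dyson expansions. One cosmetic point you share with the paper rather than fix: in both arguments the constant from Lemma~\ref{lem:tail} really sits inside the exponent, so writing the result as $O(\exp(2e^{-r/(2\xi)}C_{R,S}t))$ is a mild abuse of notation, harmless for the way the lemma is used downstream.
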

We note in passing that clearly this bound is non trivial only if the constant $C_{R,S}$ is finite.

\begin{proof}
	First, notice the following fact. For any region $R$, operator $O$, and local operator $Z_I$ on a region 
	$I$, we have
	\begin{equation}\label{eq:ent_neg_product}
		\|(Z_I O)^{T_R}\|_1\leq d_{R\cap I}\|Z_I O^{T_{R/I}}\|_1\leq d_{R\cap I}\|Z_I\|\| O^{T_{R/I}}\|_1= d_{R\cap I}\|Z_I\|\| (O^{T_{R}})^{T_{R\cap I}}\|_1\leq d_{R\cap I}^2\|Z_I\|\| O^{T_{R}}\|_1
	\end{equation}
	next, we use the Dyson series expression for the tail evolution. First, notice that by the triangle inequality
	\begin{equation}
		\begin{aligned}
			&||(U_S(t)\rho (U_S^T(t))^{\dagger})^{T_R}||_1\leq \sum_{k,n=0}^{\infty} \int_0^t\mathrm dt_1\dots\int_0^{t_{k-1}}\mathrm dt_k\int_0^t\mathrm ds_1\dots\int_0^{s_{n-1}}\mathrm ds_n\left\|\left(\tilde T_S(t_1)\dots \tilde T_S(t_k)\rho\tilde T_S(s_n)\dots \tilde T_S(s_1)\right)^{T_R}\right\|_1
		\end{aligned}
	\end{equation}
	recall that we can write \begin{equation}
		\tilde T_S(t)=\sum_{q=0}^{\infty} \tilde Z_{S+q}(t)
	\end{equation}
	where $\tilde Z_{S+q}(t)=e^{-iH_St}Z_{S+q}e^{iH_St}$ acts only on $S+q$. We then have, using Eq.~ \eqref{eq:ent_neg_product} and the triangle inequality
	\begin{equation}
		\begin{aligned}
			&\left\|\left(\tilde T_S(t_1)\dots \tilde T_S(t_k)\rho\tilde T_S(s_n)\dots \tilde T_S(s_1)\right)^{T_R}\right\|_1\leq\\& \sum_{\substack{q_1,\dots, q_k=0\\p_1,\dots, p_n=0}}^{\infty} \left\|\left(\tilde Z_{S+q_1}(t_1)\dots \tilde Z_{S+q_k}(t_k)\rho\tilde Z_{S+p_n}(s_n)\dots \tilde Z_{S+p_1}(s_1)\right)^{T_R}\right\|_1 \leq ||\rho^{T_R}||_1\left(\sum_{q=0}^{\infty} ||Z_{S+q}||d_{R\cap(S+q)}^2\right)^{n+k} \\&
		\end{aligned}
	\end{equation}
	before plugging this in in the initial expression, notice that 
	\begin{equation}
		\int_0^t\mathrm dt_1\dots\int_0^{t_{k-1}}\mathrm dt_k= \frac{t^k}{k!},
	\end{equation}
	we then have
	\begin{equation}
		\begin{aligned}
			&||(U_S^T(t)\rho (U_S^T(t))^{\dagger})^{T_R}||_1\leq \sum_{k,n=0}^{\infty} \frac{t^k}{k!}\frac{t^n}{n!} \left(\sum_{q=0}^{\infty} ||Z_{S+q}||d_{R\cap(S+q)}^2\right)^{n+k}=\exp\left(2t \sum_{q=0}^{\infty} ||Z_{S+q}||d_{R\cap(S+q)}^2\right).
		\end{aligned}
	\end{equation}
	We can
	conclude by using that by Lemma \ref{lem:tail} 
	that this expression has a norm bounded by $\|\tilde Z_{S+q}(t)\|=\| Z_{S+q}\|\leq O\left(e^{-(r+q)/(2\xi)}\right)$.
\end{proof}
This allows us to prove the first part of the bound: We can write
\begin{equation}
	e^{iHt}=  e^{iH_{\del} t}U^{T}_{\del}(t)e^{iH_Lt}U^{T}_L(t)e^{iH_Rt}U^{T}_R(t) .
\end{equation}
Given a generator $h$, define the evolution operator $\mathcal U_h(\rho):=e^{ih}\rho e^{-ih}$, furthermore define the tail evolution operator as $\mathcal U^{T,t}_S(\rho)=U^{T}(t)_S\rho (U^T(t)_S)^{\dagger}$. Consider a state $\sigma$, notice that the overlap between $B$ and $\del$ is of length $2r-|C|$, in particular, 
\begin{equation}
	||\tr_C(\mathcal U_{H_{\del}t}(\sigma))^{T_B}||_1\leq ||\tr_C(e^{iH_{\del}t}(\sigma)^{T_{(2r-|C|/2,\infty)}} e^{-iH_{\del}t})^{T_{[|C|/2, 2r-|C|/2]}}||_1 \leq 2^{2r-|C|} ||\sigma^{T_{(2r-|C|/2,\infty)}}||_1
\end{equation}
Here, we simply assume sufficient time has passed that $H_{\del}$ has had time to saturate the entanglement in the region $\del$, that is, create a maximally entangled state between the sites on either side of $C$. We have then
\begin{equation}
	\begin{aligned}
		||\rho(t)^{T_B}||_1&=\left\|\tr_C\left(\mathcal U_{H_{\del}t}\circ \mathcal U_{\del}^{T,t}\circ \mathcal U_{{H_L}t}\circ \mathcal U_{L}^{T,t}\circ \mathcal U_{H_{R} t}\circ \mathcal U_{R}^{T,t}(\rho)\right)^{T_B}\right\|_1\\&\leq 2^{2r-|C|}\left\|\left(\mathcal U_{\del}^{T,t}\circ \mathcal U_{{H_L}t}\circ \mathcal U_{L}^{T,t}\circ \mathcal U_{H_{R} t}\circ \mathcal U_{R}^{T,t}(\rho)\right)^{T_{(2r-|C|/2,\infty)}}\right\|_1 .
	\end{aligned}
\end{equation}
Next, by Lemma \ref{lem:IEB}, 
we have 
\begin{equation}
	\begin{aligned}         
		\left\|\rho(t)^{T_B}\right\|_1&\leq 2^{2r-|C|}E_{\del, t}\left\| \left(\mathcal U_{{H_L}t}\circ \mathcal U_{L}^{T,t}\circ \mathcal U_{H_{R} t}\circ \mathcal U_{R}^{T,t}(\rho)\right)^{T_{(2r-|C|/2,\infty)}}\right\|_1
	\end{aligned}
\end{equation}
where 
\begin{equation}
	\begin{aligned}
		E_{\del,t}&\leq\exp\left( t O\left( e^{-r/(2\xi)}\sum_{q=0}^{\infty}2^{2(|\del+q+1 \cap(2r-|C|/2,\infty)| )}e^{-q/(2\xi)}\right)\right)\\&\leq \exp\left( t O\left(e^{-r/(2\xi)}\sum_{q=0}^{\infty}2^{2q }e^{-q/(2\xi)}\right)\right)\leq \exp\left(tO\left(e^{-r/(2\xi)}\right)\right),
	\end{aligned}
\end{equation}
where we have used the assumption $\xi< {1}/({4\log(2)})$, so that the terms in the sum above are exponentially decaying and sum to a constant. Since $L$ has no overlap with $(2r-|C|/2,\infty)$, we can just bring the partial transpose inside the $L$ evolution and eliminate it by unitarity. Afterwards, we 
eliminate $\mathcal U_{L}^{T,t}$, exactly as we 
have eliminated $\mathcal U_{\del}^{T,t}$, to get
\begin{equation}
	\begin{aligned}
		||\rho(t)^{T_B}||_1&= 2^{2r-|C|}E_{\del, t}E_{L, t}\left\| \left( \mathcal U_{H_{R} t}\circ \mathcal U_{R}^{T,t}(\rho)\right)^{T_{(2r-|C|/2,\infty)}}\right\|_1
	\end{aligned}
\end{equation}
and in the same way one can show that $E_{L, t}\leq \exp\left(t\, O\left(e^{-r/(2\xi)}\right)\right)$. Finally, notice that for any region $X$ and a state $\sigma$, $||\sigma^{T_X}||_1=||\sigma^{T_{X^c}}||_1$, since $\sigma$ is Hermitian. Then 
\begin{equation}
	\begin{aligned}
		\left\| \left( \mathcal U_{H_{R} t}\circ \mathcal U_{R}^{T,t}(\rho)\right)^{T_{(2r-|C|/2,\infty)}}\right\|_1&= \left\|\left(\mathcal U_{H_{R} t}\circ \mathcal U_{R}^{T,t}(\rho)\right)^{T_{(-\infty,2r-|C|/2)}}\right\|_1\\&\leq 2^{2r-|C|}\left\|\left(\mathcal U_{R}^{T,t}(\rho)\right)^{T_{(-\infty,-|C|/2)}}\right\|_1\leq 2^{2r-|C|}E_{R,t}\left\|\rho^{T_{(-\infty,-|C|/2)}}\right\|_1\\&=2^{2r-|C|}E_{R,t},
	\end{aligned}
\end{equation}
where we have used that the initial state is a product state, $E_{R,t}$ 
can be bounded as in the previous cases. Overall, we have then proven
that
\begin{equation}
	||\rho(t)^{T_B}||_1\leq 2^{4r-2|C|}\exp\left(t\,O\left(e^{-r/(2\xi)}\right)\right)
\end{equation}
holds true. 
The final result is obtained by applying $\log_2$ on both sides. Recall that the final bound for the growth in time of the logarithmic negativity is obtained by substituting $r\sim \log(t)$ when appropriate, see Corollary \ref{cor:final_bound}.
\bibliography{refs}
\end{document}